\newtheorem{theorem}{Theorem}[]
\newtheorem{lemma}[theorem]{Lemma}
\newtheorem{corollary}[theorem]{Corollary}
\newtheorem{proposition}[theorem]{Proposition}
\theoremstyle{definition}
\newtheorem{example}[theorem]{Example}
\newtheorem{definition}[theorem]{Definition}
\newtheorem{alg}[theorem]{Algorithm}
\DeclareMathOperator{\lcm}{lcm}
\title{Permutation Codes over Finite Fields}
\author{Irwansyah\\
	{\it \small Department of Mathematics,}\\
	{\it \small Faculty of Mathematics and Natural Sciences,}\\
	{\it \small Universitas Mataram, Jl. Majapahit 62, Mataram, 83125}\\
	{\it \small INDONESIA}\\
	{\it \small Email: irw@unram.ac.id}\\
	\vspace{0.1cm}\\
	Intan Muchtadi-Alamsyah and Aleams Barra\\
	{\it \small Algebra Research Group,}\\
	{\it \small Faculty of Mathematics and Natural Sciences,}\\
	{\it \small Institut Teknologi Bandung,
		Jl. Ganesha 10, Bandung, 40132,}\\
	{\it \small INDONESIA}\\
	{\it \small Email: ntan@math.itb.ac.id, barra@math.itb.ac.id}}
\date{}
\begin{document}
	\maketitle
	
	\begin{abstract}
		In this paper we describe a class of codes called {\it permutation codes}. This class of codes is a generalization of cyclic codes and quasi-cyclic codes. We also give some examples of optimal permutation codes over binary, ternary, and $5$-ary. Then, we describe its structure as submodules over a polynomial ring.
		%Furthermore, we give a way to construct skew-consta-permutation Cauchy codes and Goppa codes. These codes provide public keys with %compact structure for McEliece's cryptosystem.
		\\[0.25cm]
		{\bf Keywords}: permutation codes, cyclic codes, quasi-cyclic codes. %Goppa codes, McEliece's cryptosystem.
	\end{abstract}

\section{Introduction}

Cyclic code is one important type of codes. This type of codes over finite field $\mathbb{F}_q$ can be considered as ideals in quotient ring $\displaystyle{\frac{\mathbb{F}_q[x]}{\langle x^n-1\rangle}},$ where $n$ is the length of codes. Based on this point of view, we can determine generator of any cyclic code, its Euclidean dual, and its dimension. Moreover, in some cases, we can also design the minimum distance and formulate decoding algorithm for cyclic codes. For more details, see \cite{pless}.

The other important type of codes is quasi-cyclic code. This type of codes is a generalization of cyclic code. Quasi-cyclic codes can be viewed as modules over a finite polynomial ring, and decomposed by the Chinese Remainder Theorem or discrete Fourier transform into products of shorter codes over larger alphabets. Based on this point of view, we can construct self-dual quasi-cyclic codes explicitly, derive a new quartenary construction of Leech lattice, enumerate self-dual one generator quasi-cyclic codes, and formulate some constructions for codes such as squaring, cubing, quinting, and septing constructions. See \cite{ling,ling2}. Cyclic and quasi-cyclic codes have several applications such as images transmision from mars to earth, compact disk storage, and being used as public keys with compact structure for McElice's cryptosystem. \\[0.25cm]

In this paper, we describe a class of codes called {\it permutation codes}. This class of codes is a generalization of cyclic codes and quasi-cyclic codes. We describe its algebraic structure and give some examples of optimal permutation codes over binary, ternary, and $5$-ary. 

\section{Basic Facts}\label{basic}

Let $C$ be a code of length $n$ over finite field $\mathbb{F}_q,$ where $q=p^r,$ for some prime number $p$ and natural number $r.$ Also, let $S_n$ be the permutation group on $n$ elements. Now, we define a class of codes as follow.

\begin{definition}
	A code $C$ is said to be a {\it permutation code} or {\it $\sigma$-code}, for some $\sigma\in S_n,$ if  for any $\textbf{c}$ in $C,$ we have
	\begin{equation}
	T_{\sigma}(\textbf{c})=\left(c_{\sigma^{-1}(1)},
	c_{\sigma^{-1}(2)},\dots,c_{\sigma^{-1}(n)}\right)
	\end{equation}
	is also in $C.$
\end{definition}

Note that, a permutation code is a code which is globally invariant under the action of a given permutation group as in \cite[Chapter 17]{pless}. Here are some examples of permutation codes.

\begin{enumerate}
	\item {\bf Cyclic Code.} A cyclic code can be considered as a $\sigma$-code, where $\sigma=(1\;2\;\cdots\;n)\in S_n.$
	
	\item {\bf Quasi-Cyclic Code.} A quasi-cyclic code is a $\sigma$-code, where $\sigma=(1\;1+d\;1+2d\cdots\;1+(l-1)d)(2\;2+d\;\cdots\;2+(l-1)d)\cdots(d-1\;d-1+d\cdots\;d-1+(l-1)d)\in S_n.$

\label{example1}
\end{enumerate}

For any code $C,$ let $C^\bot$ be the Euclidean dual of $C.$ The following proposition shows that the dual of a permutation code is also a permutation code.

\begin{proposition}
	If $C$ is a $\sigma$-code, then $C^\bot$ is also a $\sigma$-code.
	\label{dual}
\end{proposition}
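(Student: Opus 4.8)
The plan is to show that the permutation-invariance of $C$ transfers to $C^\bot$ by exploiting the compatibility between the permutation action $T_\sigma$ and the standard inner product. The key observation I would establish first is that $T_\sigma$ is an orthogonal transformation: for any two vectors $\mathbf{u},\mathbf{v}\in\mathbb{F}_q^n$, we have $\langle T_\sigma(\mathbf{u}),T_\sigma(\mathbf{v})\rangle=\langle\mathbf{u},\mathbf{v}\rangle$. This is because $T_\sigma$ merely rearranges the coordinates according to $\sigma$, and the Euclidean inner product $\sum_{i=1}^n u_iv_i$ is invariant under any simultaneous reindexing of the summands.

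**First I would** take an arbitrary $\mathbf{d}\in C^\bot$ and aim to show $T_\sigma(\mathbf{d})\in C^\bot$, i.e.\ that $\langle T_\sigma(\mathbf{d}),\mathbf{c}\rangle=0$ for every $\mathbf{c}\in C$. The natural move is to rewrite this inner product so that the permutation lands on the $C$-side, where I can use the hypothesis. Using the orthogonality of $T_\sigma$ together with the fact that $T_\sigma$ is a bijection with inverse $T_{\sigma^{-1}}$, I would write
\begin{equation}
\langle T_\sigma(\mathbf{d}),\mathbf{c}\rangle=\langle T_{\sigma^{-1}}(T_\sigma(\mathbf{d})),T_{\sigma^{-1}}(\mathbf{c})\rangle=\langle \mathbf{d},T_{\sigma^{-1}}(\mathbf{c})\rangle.
\end{equation}

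**Next I would** invoke the assumption that $C$ is a $\sigma$-code. Since permutation-invariance under $\sigma$ forces invariance under the cyclic group $\langle\sigma\rangle$ it generates, and in particular under $\sigma^{-1}$, the vector $T_{\sigma^{-1}}(\mathbf{c})$ lies in $C$ whenever $\mathbf{c}$ does. Consequently, because $\mathbf{d}\in C^\bot$, the right-hand side $\langle\mathbf{d},T_{\sigma^{-1}}(\mathbf{c})\rangle$ vanishes. As $\mathbf{c}\in C$ was arbitrary, this shows $T_\sigma(\mathbf{d})\perp C$, hence $T_\sigma(\mathbf{d})\in C^\bot$, completing the argument.

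**The only subtle point**—and the step I would be most careful about—is the claim that closure under $T_\sigma$ implies closure under $T_{\sigma^{-1}}$. This is where I would either argue via finiteness (the order of $\sigma$ is finite, so $\sigma^{-1}=\sigma^{m-1}$ where $m$ is the order of $\sigma$, and iterating $T_\sigma$ keeps vectors inside $C$), or rephrase the whole computation to put $T_\sigma$ rather than $T_{\sigma^{-1}}$ on the $C$-side from the start. Everything else is a routine verification that $T_\sigma$ preserves the inner product, which follows directly from the definition of $T_\sigma$ as a coordinate permutation.
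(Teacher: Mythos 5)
Your proof is correct and takes essentially the same route as the paper's: both arguments rest on the coordinate permutation preserving the Euclidean inner product together with the fact that $T_\sigma$ restricted to $C$ is a bijection (equivalently, that $C$ is closed under $T_{\sigma^{-1}}$). You are in fact slightly more careful than the paper, which simply asserts that every $\mathbf{c}''\in C$ has a preimage under $T_\sigma$ in $C$, whereas you explicitly justify the needed closure under $T_{\sigma^{-1}}$ via the finiteness of the order of $\sigma$.
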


\begin{proof}
	Let $\textbf{c}'=(c_1',\dots,c_n')$ be any element in $C^\bot.$ We need to show that $T_{\sigma}(\textbf{c}')$ is also in $C^\bot.$ For any $\textbf{c}''$ in $C,$ there exists $\textbf{c}$ in $C$ such that $T_{\sigma}(\textbf{c})=\textbf{c}''$ because $C$ is a $\sigma$-code. Now, consider
	\[\langle \textbf{c}'',T_{\sigma}(\textbf{c}')\rangle=\langle T_{\sigma}(\textbf{c}),T_{\sigma}(\textbf{c}')\rangle=
	\sum_{i=1}^nc_{\sigma^{-1}(i)}c_{\sigma^{-1}(i)}'=0.\]
	This gives $T_{\sigma}(\textbf{c}')\in C^\bot$ as we hope.
\end{proof}

Let $R=\mathbb{F}_q[Y]/\langle Y^n-1\rangle$ and define a left action of $\mathbb{F}_q[Y]$ on $R$ as follows.
For any $a\in R,$ let $a=f(Y)+\langle Y^n-1\rangle,$ and for any
$h(Y)\in \mathbb{F}_q[Y;\theta],$ we define
\[
h(Y)*a=h(Y)*f(Y)+\langle Y^n-1\rangle
\]
we can show that this left action is well-defined and $R$ is a left module over $\mathbb{F}_q[Y].$

Let $\sigma=\sigma_1\sigma_2\cdots \sigma_k,$ where $\sigma_i = (t_i\;\sigma(t_i)\;\cdots\;\sigma^{m_i-1}(t_i))$ is a cycle of length $m_i,$ for some $t_i$ in $\mathbb{N},$ for all $i=1,2,\dots,k.$ Also, let $\displaystyle{R_i=\frac{\mathbb{F}_q[Y]}{\langle Y^{m_i}-1\rangle}},$ for all $i=1,2,\dots,k.$ Define a map,

\begin{equation}
\begin{array}{llll}
\varphi : & \mathbb{F}_q^n & \longrightarrow & R_1\times R_2\times\cdots\times R_k\\
& \mathbf{c}=(c_1,c_2,\dots,c_n) & \longmapsto & \left(\mathbf{c}_1(Y),\mathbf{c}_2(Y),\dots,\mathbf{c}_k(Y)\right),
\end{array}
\end{equation}
where $\mathbf{c}_i(Y)=\sum_{j=0}^{m_i-1}c_{\sigma^{j}(t_i)}Y^j,$ for all $i=1,2,\dots,k.$ Let $\varphi(C)$ be the image of $C$ under the map $\varphi.$ We have the following proposition.

\begin{proposition}\label{phi}
	The map $\varphi$ induces a {\it one-to-one} correspondence between $\sigma$-codes of length $n$ over $\mathbb{F}_q$ and submodules of $R_1\times R_2\times \cdots\times R_k$ over $\mathbb{F}_q[Y].$
\end{proposition}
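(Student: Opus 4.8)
The plan is to exhibit $\varphi$ as an $\mathbb{F}_q$-linear bijection that intertwines the shift $T_\sigma$ on $\mathbb{F}_q^n$ with multiplication by $Y$ on $R_1\times\cdots\times R_k$; once this single structural identity is established, the correspondence follows by a purely formal argument.

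First I would verify that $\varphi$ is a linear bijection. Since $\sigma=\sigma_1\cdots\sigma_k$ is a disjoint cycle decomposition covering all of $\{1,\dots,n\}$, the pairs $(i,j)$ with $1\le i\le k$ and $0\le j\le m_i-1$ index the coordinates bijectively via $\ell=\sigma^{j}(t_i)$; in particular $\sum_{i=1}^{k}m_i=n$. Thus $\varphi$ merely relabels and regroups the coordinates, so it is $\mathbb{F}_q$-linear and bijective, with source and target of equal $\mathbb{F}_q$-dimension $n=m_1+\cdots+m_k$.

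The heart of the argument is the identity $\varphi(T_\sigma(\mathbf{c}))=Y*\varphi(\mathbf{c})$, where $Y$ acts coordinatewise. Writing $\mathbf{d}=T_\sigma(\mathbf{c})$ so that $d_\ell=c_{\sigma^{-1}(\ell)}$, I would compute the $i$-th component $\sum_{j=0}^{m_i-1}d_{\sigma^{j}(t_i)}Y^{j}=\sum_{j=0}^{m_i-1}c_{\sigma^{j-1}(t_i)}Y^{j}$ and compare it with $Y\cdot\mathbf{c}_i(Y)$ in $R_i$. The only delicate point is the wrap-around: the $j=0$ term involves $\sigma^{-1}(t_i)=\sigma^{m_i-1}(t_i)$, valid because $\sigma_i$ has length $m_i$, and this matches exactly the term produced when $Y$ multiplies the top coefficient of $\mathbf{c}_i(Y)$ and the relation $Y^{m_i}=1$ in $R_i$ folds it back to the constant term. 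This wrap-around bookkeeping is the main obstacle, and it is precisely where the choice of quotient $R_i=\mathbb{F}_q[Y]/\langle Y^{m_i}-1\rangle$ is used.

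Finally I would assemble the correspondence. An $\mathbb{F}_q$-subspace of $R_1\times\cdots\times R_k$ is an $\mathbb{F}_q[Y]$-submodule exactly when it is closed under multiplication by $Y$, since $Y$ generates $\mathbb{F}_q[Y]$ as an $\mathbb{F}_q$-algebra and the action is $\mathbb{F}_q$-linear; likewise a code $C$ is a $\sigma$-code exactly when $T_\sigma(C)\subseteq C$. If $C$ is a $\sigma$-code, then $\varphi(C)$ is a subspace with $Y*\varphi(C)=\varphi(T_\sigma(C))\subseteq\varphi(C)$, hence a submodule; conversely, if $M$ is a submodule then $C=\varphi^{-1}(M)$ satisfies $T_\sigma(C)=\varphi^{-1}(Y*M)\subseteq\varphi^{-1}(M)=C$, hence is a $\sigma$-code. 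Because $\varphi$ is bijective, the assignments $C\mapsto\varphi(C)$ and $M\mapsto\varphi^{-1}(M)$ are mutually inverse, which yields the claimed one-to-one correspondence.
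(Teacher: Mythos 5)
Your proof is correct and rests on the same key computation as the paper's: the intertwining identity $\varphi(T_\sigma(\mathbf{c}))=Y\,\varphi(\mathbf{c})$, obtained by tracking the wrap-around term through the relation $Y^{m_i}=1$ in $R_i$. You are in fact somewhat more complete than the paper, whose proof only verifies the forward direction ($\sigma$-code $\Rightarrow$ submodule) and leaves the bijectivity of $\varphi$ and the converse direction implicit.
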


\begin{proof}
	Let $C$ be a $\sigma$-code of length $n$ over $\mathbb{F}_q.$ The $\varphi(C)$ will be closed under the multiplication by elements of $\mathbb{F}_q$ because $C$ is a linear code. Since $Y^{m_i}=1$ in $R_i,$ for all $i=1,2,\dots,k,$ consider
	\[Y\mathbf{c}_i(Y)=\sum_{j=0}^{m_i-1}c_{\sigma^j(t_i)}Y^{j+1}=c_{\sigma^{m_i-1}(t_i)}+c_{t_i}Y+c_{\sigma(t_i)}Y^2
+\cdots+c_{\sigma^{m_i-2}(t_i)}Y^{m_i-1}.\]
	The above equation implies, for any $\mathbf{c}=(c_1,\dots,c_n)$ in $\mathbb{F}_q^n,$
	\[\varphi\left(T_{\sigma}(\mathbf{c})\right)=\left(Y\mathbf{c}_1(Y),Y\mathbf{c}_2(Y),\dots,Y\mathbf{c}_1(Y)\right).\]
	So, $\varphi(C)$ also closed under the multiplication by $Y$ and the action $T_{\sigma}$ in $C$ is correspond to the multiplication by $Y$ in $R_1\times\cdots\times R_k.$ Therefore, $\varphi(C)$ is a submodule of $R_1\times\cdots\times R_k$ over $\mathbb{F}_q[Y].$
\end{proof}

\section{Good Permutation Codes}

The results in the previous section give us a simple systematic way to construct permutation codes. Therefore, in this part, we will construct permutation codes using \verb|Octave|. Due to the limited memory in \verb|Octave|, we only construct codes with small length and dimension.

Here is an example of $\sigma$-code obtained using the corresponding submodule as in Proposition~\ref{phi}.

\begin{example}\label{contohsatu}
	Let $\sigma = (1\;2\;3)(4\;5).$ We would like to find $\sigma$-code of length $5$ over $\mathbb{F}_2.$ Consider a map
	\[\begin{array}{llll}
	\varphi : & \mathbb{F}_2^5 & \longrightarrow & \displaystyle{\frac{\mathbb{F}_2[Y]}{\langle Y^3-1\rangle}}\times \displaystyle{\frac{\mathbb{F}_2[Y]}{\langle Y^2-1\rangle}}\\
	& (c_1,c_2,c_3,c_4,c_5) & \longmapsto & (c_1+c_2Y+c_3Y^2,c_4+c_5Y).
	\end{array} \]
	Now, choose $C=\langle (1+Y,1+Y)\rangle\subseteq \displaystyle{\frac{\mathbb{F}_2[Y]}{\langle Y^3-1\rangle}}\times \displaystyle{\frac{\mathbb{F}_2[Y]}{\langle Y^2-1\rangle}}.$ We can see that
	\[Y(1+Y,1+Y)=(Y+Y^2,1+Y)\]
	and
	\[Y^2(1+Y,1+Y)=(1+Y^2,1+Y).\]
	 So, we have
	\[C=\langle(1+Y,1+Y),(Y+Y^2,1+Y),(1+Y^2,1+Y)\rangle.\]
	This means, $\varphi^{-1}(C)=\langle (0,1,1,1,1),(0,1,1,1,1),(1,0,1,1,1)\rangle.$ The code $\varphi^{-1}(C)$ is a $\sigma$ binary code with dimension 3 and Hamming distance 2.
\end{example}

We use the following simple algorithm, based on the result in previous section, to construct permutation codes of length $n.$

\begin{alg}\label{construction}
Let $T$ be a shift operator such that $T(a_1,a_2,\dots,a_m)=(a_m,a_1,a_2,\dots,a_{m-1}).$
\begin{itemize}
\item[1.] Choose a permutation $\sigma\in S_n,$ where $\sigma=\sigma_1\cdots\sigma_t,$ and $\sigma_i$ is a cycle of length $m_i,$ such that $lcm(m_1,\dots,m_t)=k.$

\item[2.] Choose a vector $\mathbf{a}=\left(\mathbf{a}^{1}|\mathbf{a}^{2}|\cdots|\mathbf{a}^{t}\right),$ where $\mathbf{a}^{i}=(a_{i1},a_{i2},\dots,a_{im_i})\in\mathbb{F}^{m_i},$ such that vectors $\mathbf{a}^{i},T(\mathbf{a}^{1}),\dots,T^{m_i-1}(\mathbf{a}^{1})$ are linearly independent, for all $i=1,2,\dots,t.$

\item[3.] Generate vectors $\mathbf{a},T_{\sigma}(\mathbf{a}),\dots,T_{\sigma}^{k-1}(\mathbf{a}).$

\item[4.] Generate $\sigma$-code $C$ with generators $\mathbf{a},T_{\sigma}(\mathbf{a}),\dots,T_{\sigma}^{k-1}(\mathbf{a}).$

\end{itemize}
\end{alg}

Using a similar way as in Example~\ref{contohsatu} and Algorithm~\ref{construction}, we construct some optimal binary, ternary, and 5-ary $\sigma$-codes as shown in Tables~\ref{binary},\ref{ternary}, and \ref{5ary}, where the optimality is based on tables of optimal linear codes in \url{www.codetables.de}.
Note that, generator given in the table is for the corresponding submodule. The letters $k$ and $d$ are notations for dimension and Hamming distance of the corresponding binary/ternary/5-ary code, respectively.  \\[0.25cm]

\section{Algebraic Structure of Permutation Codes}

\subsection{Permutation codes as torsion submodules}
Let $\sigma = \sigma_1\sigma_2\cdots\sigma_k$ in $S_n,$ where $\sigma_i$ is a cycle of length $m_i,$ for all $i=1,2,\dots,k.$ As we already show in Section~\ref{basic}, a $\sigma$-code can be considered as a submodule of $M=M(q,m_1,\dots,m_k)=R_1\times R_2\times \cdots\times R_k$ over $\mathbb{F}_q[Y],$ where $R_i=\mathbb{F}_q[Y]/\langle Y^{m_i}-1\rangle,$ for all $i=1,2,\dots,k.$ In this section, we will describe algebraic structure of permutation codes by viewing $M$ as a torsion module over $\mathbb{F}_q[Y].$ Here we recall the definition of torsion module.

\begin{definition}{\cite{roman}}
 Let $N$ be a module over a ring $R.$
 \begin{itemize}
 \item A non-zero element $v$ in $N$ for which $rv=0$ for some non-zero element $r$ in $R,$ is called a {\bf torsion element}.

 \item If all elements of $N$ are {\bf torsion elements}, then $N$ is called a {\bf torsion module}.
 \end{itemize}
\end{definition}

The following proposition shows that $M=M(q,m_1,\dots,m_k)=R_1\times R_2\times \cdots\times R_k$ is a torsion module over $\mathbb{F}_q[Y].$

\begin{proposition}
The module $M$ is a torsion module over $\mathbb{F}_q[Y].$ Moreover, the order of $M,$ $o(M),$ is equal to $lcm\left(Y^{m_1}-1,\dots,Y^{m_k}-1\right).$
\end{proposition}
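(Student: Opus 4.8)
The plan is to work directly with the annihilator ideal of $M$ inside the principal ideal domain $\mathbb{F}_q[Y]$, since for a finitely generated torsion module the order is the monic generator of its annihilator. First I would settle the torsion claim factor-by-factor. For each $i$ the relation $Y^{m_i}-1\equiv 0$ holds in $R_i$, so every $a_i\in R_i$ satisfies $(Y^{m_i}-1)\ast a_i=0$ with $Y^{m_i}-1$ a nonzero polynomial; hence each nonzero $a_i$ is a torsion element and $R_i$ is a torsion module. For a general $\mathbf{a}=(a_1,\dots,a_k)\in M$, the single nonzero polynomial $\lcm(Y^{m_1}-1,\dots,Y^{m_k}-1)$ annihilates every coordinate at once, so any nonzero $\mathbf{a}$ is a torsion element. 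This shows $M$ is a torsion module.

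For the order I would identify $\mathrm{Ann}(M)=\{h\in\mathbb{F}_q[Y]:h\ast\mathbf{a}=0\text{ for all }\mathbf{a}\in M\}$ and prove it equals $\langle\lcm(Y^{m_1}-1,\dots,Y^{m_k}-1)\rangle$. The key sub-step is computing $\mathrm{Ann}(R_i)$. Evaluating at the unit $1\in R_i$ forces any annihilating $h$ to satisfy $h\equiv 0$ in $R_i$, that is $Y^{m_i}-1\mid h$; conversely, divisibility by $Y^{m_i}-1$ makes $h$ act as zero on all of $R_i$. Thus $\mathrm{Ann}(R_i)=\langle Y^{m_i}-1\rangle$. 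Since an element annihilates the product $M$ exactly when it annihilates each factor, $\mathrm{Ann}(M)=\bigcap_{i=1}^{k}\langle Y^{m_i}-1\rangle$.

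Finally I would invoke the PID structure of $\mathbb{F}_q[Y]$: the intersection of principal ideals $\bigcap_i\langle f_i\rangle$ is generated by $\lcm(f_1,\dots,f_k)$. Applying this with $f_i=Y^{m_i}-1$ yields $\mathrm{Ann}(M)=\langle\lcm(Y^{m_1}-1,\dots,Y^{m_k}-1)\rangle$, whose monic generator is $o(M)$, as claimed. The main obstacle is the computation of the annihilator as an $\lcm$: one must argue carefully that any polynomial killing all of $M$ is divisible by each $Y^{m_i}-1$ (via the evaluation at $1\in R_i$), and then translate the resulting intersection of ideals into an $\lcm$ using the unique factorization property of $\mathbb{F}_q[Y]$, rather than merely exhibiting a single annihilator such as the product $\prod_i(Y^{m_i}-1)$, which is generally not minimal when the $m_i$ share common factors.
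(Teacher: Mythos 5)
Your proposal is correct and follows essentially the same route as the paper: both verify that $\lcm(Y^{m_1}-1,\dots,Y^{m_k}-1)$ annihilates $M$, then pin down the annihilator ideal by testing against the coordinate vectors $(0,\dots,0,1,0,\dots,0)$ to force divisibility by each $Y^{m_i}-1$, and finish using that $\mathbb{F}_q[Y]$ is a PID. Your packaging via $\mathrm{Ann}(M)=\bigcap_i\langle Y^{m_i}-1\rangle=\langle\lcm(\dots)\rangle$ is a slightly cleaner organization than the paper's degree-comparison contradiction, but the underlying argument is the same.
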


\begin{proof}
If $r(Y)=lcm\left(Y^{m_1}-1,\dots,Y^{m_k}-1\right),$ then $ra=0,$ for all $a$ in $M.$ So, $M$ is a torsion module over $\mathbb{F}_q[Y].$  Moreover, if $S$ the annihilator ideal of $M,$ then $r(Y)\in S.$ Since, $\mathbb{F}_q[Y]$ is a principal ideal domain, assume that $S=\langle g(Y)\rangle,$ for some $g(Y)$ in $\mathbb{F}_q[Y].$ Suppose that $\deg(g)<\deg(r),$ then there exists $i\in \{1,2,\dots,k\}$ such that $g\not\equiv 0 \mod (y^{m_i}-1).$ Consequently, if we choose $a=(0,\dots,0,1,0,\dots,0)\in M,$ then $ga\not=0,$ a contradiction. So, $r=bg,$ for some $b\in\mathbb{F}_q^\times.$ Therefore, $S=\langle r(Y)\rangle,$ or $o(M)=r(Y)=lcm\left(Y^{m_1}-1,\dots,Y^{m_k}-1\right).$
\end{proof}

Let $lcm(Y^{m_1}-1,\dots,Y^{m_k}-1)=\prod_{j=1}^tf_j(Y)^{\alpha_j},$ for some irreducible polynomial $f_j(Y)$ and integer $\alpha_j\geq 1,$ for all $j=1,2,\dots,t.$ Then, based on the primary decomposition theorem \cite[Theorem 6.10]{roman}, we have

\begin{equation}\label{decomposition}
M=\bigoplus_{i=1}^t M_i,
\end{equation}
where $M_i$ is a primary module of order $f_i^{\alpha_i},$ {\it i.e.} $M_i=\{a\in M|f_i^{\alpha_i}a=0\}.$ Moreover, by cyclic decomposition theorem for a primary module \cite[Theorem 6.12]{roman}, we can decompose each $M_i$ as follows.

\begin{equation}\label{decomposition2}
M_i = \bigoplus_{j=1}^{t_i} \langle v_{ir}\rangle,
\end{equation}
with $annihilator(\langle v_{ir}\rangle)=\langle f_i^{e_{ij}}\rangle,$ where $e_{i1}=\alpha_i\geq e_{i2}\geq e_{i3}\geq \cdots\geq e_{it_i}.$
Therefore, we have

\begin{equation}\label{decomposition3}
M=\bigoplus_{i=1}^t\bigoplus_{j=1}^{t_i}\langle v_{ij}\rangle,
\end{equation}
where $o(v_{ij})=f_i^{e_{ij}}$ as in the previous decomposition. Let $\mathcal{R}=\{1,2,\dots,t\},$ we have the following result for permutation codes.

\begin{theorem}\label{permcode}
Let $C$ be a $\sigma$-code over $\mathbb{F}_q,$ and $\mathcal{R}_C\subseteq \mathcal{R},$ where for any $i\in\mathcal{R}_C,$ $f_i|o(C).$ Then,
\begin{itemize}
\item[(a)] The order of $C,$ {\it i.e.} $o(C),$ is equal to $\prod_{i\in\mathcal{R}_C}f_i^{\beta_i},$ where $\beta_i\leq\alpha_i,$ for all $i\in \mathcal{R}_C.$

\item[(b)] The code $C$ can be written as
\[C=\bigoplus_{i\in\mathcal{R}_C}\bigoplus_{j=1}^{t_i}\langle w_{ij}\rangle,\]
for some $w_{ij}\in M$ for which $o(w_{ij})=f_{i}^{e_{ij}},$ where $e_{i1}=\beta_i\geq e_{i2}\geq \cdots\geq e_{it_i}.$

\item[(c)] The dimension of $C$ over $\mathbb{F}_q$ is equal to $deg(o(C)).$
\end{itemize}
\end{theorem}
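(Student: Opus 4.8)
The plan is to regard $C$ as a torsion module over the principal ideal domain $\mathbb{F}_q[Y]$ and to run the same primary/cyclic decomposition machinery that produced (\ref{decomposition3}) for $M$, but now applied to the submodule $C$ itself. For part (a) I would start from the inclusion $C\subseteq M$: since $o(M)=\prod_{i=1}^t f_i^{\alpha_i}$ annihilates every element of $M$, it annihilates every element of $C$, so $o(M)\in\mathrm{Ann}(C)=\langle o(C)\rangle$ and hence $o(C)\mid o(M)$. Factoring $o(C)$ into prime powers and recalling that $\mathcal{R}_C$ is \emph{defined} to be the set of indices $i$ with $f_i\mid o(C)$, this divisibility at once yields $o(C)=\prod_{i\in\mathcal{R}_C}f_i^{\beta_i}$ with $0\le\beta_i\le\alpha_i$. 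This step is routine.

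For part (b) I would invoke the primary decomposition theorem \cite[Theorem 6.10]{roman} applied to $C$: from $o(C)=\prod_{i\in\mathcal{R}_C}f_i^{\beta_i}$ one gets $C=\bigoplus_{i\in\mathcal{R}_C}C_i$, where $C_i$ is the $f_i$-primary component of $C$, of order $f_i^{\beta_i}$. The cyclic decomposition theorem for primary modules \cite[Theorem 6.12]{roman} then writes each $C_i=\bigoplus_j\langle w_{ij}\rangle$ with $o(w_{ij})=f_i^{e_{ij}}$ and $e_{i1}=\beta_i\ge e_{i2}\ge\cdots$. The only point that needs work beyond quoting these theorems is the claim that the number of cyclic summands of $C_i$ is at most $t_i$, the number for $M_i$. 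Here I would use that $C_i\subseteq M_i$ together with the fact that a submodule of a module generated by $r$ elements over a PID can itself be generated by at most $r$ elements: lifting the surjection $\mathbb{F}_q[Y]^{t_i}\twoheadrightarrow M_i$ and noting that the preimage of $C_i$ is a free module of rank $\le t_i$ shows $C_i$ needs at most $t_i$ generators. Padding with zero summands then realizes the decomposition with exactly $t_i$ indices $j$, giving (b).

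For part (c) the natural route is to add up $\mathbb{F}_q$-dimensions across the decomposition in (b): each cyclic piece satisfies $\dim_{\mathbb{F}_q}\langle w_{ij}\rangle=\deg\!\big(f_i^{e_{ij}}\big)=e_{ij}\deg f_i$, so $\dim_{\mathbb{F}_q}C=\sum_{i\in\mathcal{R}_C}\big(\sum_j e_{ij}\big)\deg f_i$. I expect this to be the main obstacle. The claimed value is $\deg o(C)=\sum_{i\in\mathcal{R}_C}\beta_i\deg f_i=\sum_{i}e_{i1}\deg f_i$, which uses only the leading exponent of each primary component; the two expressions agree precisely when $\sum_j e_{ij}=e_{i1}$ for every $i$, that is, when each $C_i$ is cyclic (equivalently, when $C$ is a cyclic $\mathbb{F}_q[Y]$-module). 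This is automatic for cyclic codes, where every $t_i=1$, and for the singly generated submodules produced by Algorithm~\ref{construction} and Example~\ref{contohsatu}, where $C\cong\mathbb{F}_q[Y]/\langle o(C)\rangle$ forces $\dim_{\mathbb{F}_q}C=\deg o(C)$ directly. The delicate part of the argument is therefore to establish this cyclicity under whatever hypotheses are intended, since for a general submodule with two or more cycles (for instance $C=M$) one instead has $\dim_{\mathbb{F}_q}C>\deg o(C)$, and the dimension count above is then the honest value rather than $\deg o(C)$.
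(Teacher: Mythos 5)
Your treatment of parts (a) and (b) follows the paper's proof essentially verbatim: the paper also derives (a) from $\mathrm{ann}(M)\subseteq\mathrm{ann}(C)$, hence $o(C)\mid o(M)=\prod_j f_j^{\alpha_j}$, and proves (b) by citing the primary and cyclic decomposition theorems of Roman. Your extra observation that a submodule of $M_i$ needs at most $t_i$ generators over the PID $\mathbb{F}_q[Y]$ (so the decomposition of $C_i$ can be padded to exactly $t_i$ summands) is a detail the paper leaves implicit, and it is correct.

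For part (c) you have put your finger on a genuine defect rather than missed an idea. The paper's argument takes $g\in C$ with $o(g)=o(C)$ and $s=\deg o(C)$ and asserts that $\{g,Yg,\dots,Y^{s-1}g\}$ is a \emph{maximal} linearly independent subset of $C$; linear independence is fine, but this set spans only the cyclic submodule $\langle g\rangle$, so maximality holds exactly when $C=\langle g\rangle$, i.e.\ when $C$ is a one-generator $\mathbb{F}_q[Y]$-module. Your dimension count $\dim_{\mathbb{F}_q}C=\sum_{i\in\mathcal{R}_C}\bigl(\sum_{j}e_{ij}\bigr)\deg f_i$ is the correct general formula, and it strictly exceeds $\deg o(C)=\sum_{i}e_{i1}\deg f_i$ as soon as some primary component of $C$ is non-cyclic. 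A concrete counterexample to (c) as stated is $\sigma=(1\,2)(3\,4)$ over $\mathbb{F}_2$ with $C=\mathbb{F}_2^4=M$: then $o(C)=Y^2-1$ has degree $2$ while $\dim C=4$. So the obstacle you flag is a flaw in the theorem and in the paper's own proof, not a gap in your proposal; (c) is salvageable only under the additional hypothesis that $C$ is cyclic as an $\mathbb{F}_q[Y]$-module, which does hold for the one-generator codes produced in Example~\ref{contohsatu} and by Algorithm~\ref{construction}.
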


\begin{proof}
(a) If $C$ is a submodule of $M,$ then $ann(M)\geq ann(C).$ This means, the generator of $ann(C)$ divides $\prod_{j=1}^tf_j(Y)^{\alpha_j}$ as we hope. So, if $\mathcal{R}_C\subseteq \mathcal{R},$ where for any $i\in\mathcal{R}_C,$ $f_i|o(C),$ then $o(C)=\prod_{i\in\mathcal{R}_C}f_i^{\beta_i},$ where $\beta_i\leq\alpha_i,$ for all $i\in \mathcal{R}_C.$\\[0.25cm]
(b) Apply \cite[Theorem 6.12]{roman} as in the previous decomposition for $M.$\\[0.25cm]
(c) Let $g(Y)$ be an element in $C$ for which $o(g)=\prod_{i\in\mathcal{R}_C}f_i^{\beta_i},$ and $\deg(o(g))=s.$ Then, over $\mathbb{F}_q,$ the set $\{g(Y),Yg(Y),\dots,Y^{s-1}g(Y)\}$ is a maximal linearly independent set as we hope.
\end{proof}

\subsection{Duality}\label{duality}

In the previous approach, we have a problem in describing dual of a code in the torsion module $M.$ So, in this part, we will describe a way to see duality for permutation codes easily. Recall that, $M=\displaystyle{\frac{\mathbb{F}_q[Y]}{\langle Y^{m_1}-1\rangle}\times\cdots\times\frac{\mathbb{F}_q[Y]}{\langle Y^{m_k}-1\rangle}},$ $f(Y)=\lcm\left(Y^{m_1}-1,\dots,Y^{m_k}-1\right),$ $\deg(f)=m,$ and $m'=\lcm(m_1,\dots,m_k).$ We have the following properties.

\begin{lemma}\label{cm}
	Polynomial $p$ is a common multiple of $Y^{m_1}-1,Y^{m_2}-1,\dots,$ and $Y^{m_k}-1$ if and only if $pb=0,$ for all $b\in M.$
\end{lemma} 	

\begin{proof}
	($\Leftarrow$) When $p$ is a common multiple of $Y^{m_1}-1,Y^{m_2}-1,\dots,$ and $Y^{m_k}-1,$ we have that $p\equiv 0 \mod(Y^{m_i}-1),$ for all $i=1,\dots,k.$ \\
	($\Rightarrow$) If $pb=0$ for all $b\in M,$ then $p(1,1,\dots,1)=0.$ So, we have $p\equiv 0 \mod(Y^{m_i}-1),$ for all $i=1,\dots,k.$ Therefore, $Y^{m_i}-1|p,$ for all $i=1,\dots,k.$
\end{proof}

\begin{proposition}\label{contain}
	Let $\langle f(Y)\rangle$ be an ideal, in $\mathbb{F}_q[Y],$ generated by $f(Y).$ Then, $Y^{m'}-1$ is an element in $\langle f(Y)\rangle$ and, moreover, $\langle Y^{m'}-1\rangle\subseteq \langle f(Y)\rangle.$
\end{proposition}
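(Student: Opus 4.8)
The plan is to reduce both assertions to the single divisibility $f(Y) \mid Y^{m'}-1$, since once $Y^{m'}-1$ lies in the principal ideal $\langle f(Y)\rangle$, the containment $\langle Y^{m'}-1\rangle\subseteq\langle f(Y)\rangle$ is immediate: every multiple of $Y^{m'}-1$ is then also a multiple of $f(Y)$. So the entire content is to show that $f(Y)$, being the least common multiple of the $Y^{m_i}-1$, divides $Y^{m'}-1$; for that it suffices to verify that $Y^{m'}-1$ is \emph{a} common multiple of all the $Y^{m_i}-1$.

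First I would record the arithmetic fact driving everything: because $m'=\lcm(m_1,\dots,m_k)$, each $m_i$ divides $m'$. I would then transfer this integer divisibility to polynomial divisibility, i.e. show $Y^{m_i}-1 \mid Y^{m'}-1$ for every $i$. The cleanest route is to work in each factor $R_i=\mathbb{F}_q[Y]/\langle Y^{m_i}-1\rangle$, where $Y^{m_i}=1$, so that writing $m'=m_i\cdot(m'/m_i)$ gives
\[
Y^{m'} = \left(Y^{m_i}\right)^{m'/m_i} = 1,
\]
whence $Y^{m'}-1\equiv 0 \pmod{Y^{m_i}-1}$. This holds for all $i=1,\dots,k$, so $Y^{m'}-1$ is a common multiple of $Y^{m_1}-1,\dots,Y^{m_k}-1$. (Equivalently, one can package this observation through Lemma~\ref{cm}: since $Y^{m'}-1$ annihilates $(1,1,\dots,1)\in M$ and more generally every element of $M$, it is a common multiple.)

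With $Y^{m'}-1$ identified as a common multiple, I would invoke the defining property of the least common multiple in the principal ideal domain $\mathbb{F}_q[Y]$: $f(Y)=\lcm(Y^{m_1}-1,\dots,Y^{m_k}-1)$ divides every common multiple, hence $f(Y)\mid Y^{m'}-1$. This gives $Y^{m'}-1\in\langle f(Y)\rangle$, and then closure of the ideal under multiplication by arbitrary polynomials yields $\langle Y^{m'}-1\rangle\subseteq\langle f(Y)\rangle$, completing the argument.

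There is no serious obstacle here; the proposition is essentially a bookkeeping consequence of the $\lcm$ construction. The only step requiring any care is the passage from $m_i\mid m'$ to $Y^{m_i}-1\mid Y^{m'}-1$, and even that is routine once one either factors $Y^{m'}-1$ through the geometric-series identity or, more slickly, reads off $Y^{m'}=1$ inside each quotient ring $R_i$ as above.
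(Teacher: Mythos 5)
Your proof is correct, and your main route differs from the paper's in how it establishes the key fact that $Y^{m'}-1$ is a common multiple of the $Y^{m_i}-1$. You argue directly from the arithmetic: $m_i\mid m'$ because $m'=\lcm(m_1,\dots,m_k)$, hence $Y^{m_i}-1\mid Y^{m'}-1$ (read off from $Y^{m'}=(Y^{m_i})^{m'/m_i}=1$ in $R_i$), so $f(Y)=\lcm(Y^{m_1}-1,\dots,Y^{m_k}-1)$ divides $Y^{m'}-1$ and the ideal containment follows. The paper instead routes the argument through the code-theoretic setup: since $\sigma$ has order $m'$, the operator $T_\sigma^{m'}$ is the identity, so under the correspondence with multiplication by $Y$ the element $Y^{m'}-1$ annihilates all of $M$; Lemma~\ref{cm} then converts this annihilation statement into the common-multiple statement, and the conclusion follows as in your argument. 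Your version is more elementary and self-contained --- it needs neither the module $M$ nor Lemma~\ref{cm} --- while the paper's version makes explicit why the exponent $m'$ is forced by the permutation itself (it is the order of $\sigma$), which is the interpretation that motivates the embedding into $M'$ immediately afterwards. You even note the paper's packaging via Lemma~\ref{cm} as an equivalent alternative, so nothing is missing; both proofs are sound.
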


\begin{proof}
Since $\sigma=\sigma_1\sigma_2\cdots\sigma_k,$ where $\sigma_i$ is a cycle of length $m_i,$ for all $i=1,2,\dots,k,$ we have $order(\sigma)=m'$ and $T^{m'}_\sigma(\mathbf{a})=\mathbf{a}.$ Recall that $T^j_\sigma(\mathbf{a})$ corresponds to $Y^j(\phi(\mathbf{a})).$ So, we have $Y^{m'}(\phi(\mathbf{a}))=\phi(\mathbf{a})$ or $(Y^{m'}-1)\phi(\mathbf{a})=0.$ By Lemma~\ref{cm}, $Y^{m'}-1$ is a common multiple of $Y^{m_1}-1,Y^{m_2}-1,\dots,$ and $Y^{m_k}-1.$ Therefore, $f(Y)|Y^{m'}-1.$ 	
\end{proof}

Based on Proposition~\ref{contain}, it is natural to define an injective map from $\mathbb{F}_q^n$ to $M'=\displaystyle{\frac{\mathbb{F}_q[Y]}{\langle Y^{m'}-1\rangle}\times\cdots\times\frac{\mathbb{F}_q[Y]}{\langle Y^{m'}-1\rangle}}.$
Without loss of generality, assume that $\sigma_i=(1+\sum_{j=1}^{i-1}m_j,2+\sum_{j=1}^{i-1}m_j,\dots,\sum_{j=1}^{i}m_j),$ for all $i=2,\dots,k,$ and $\sigma_1=(1,2,\dots,m_1).$ Any $\mathbf{a}$ in $\mathbb{F}_q^n$ can be written as
\[\mathbf{a}=\left(\mathbf{a}_1|\mathbf{a}_2|\cdots|\mathbf{a}_k\right),\]
where $\mathbf{a}_i\in\mathbb{F}_q^{m_i},$ for all $i=1,2,\dots,k.$
 {\it First}, define a map from $\mathbb{F}_q^n,$ where $n=m_1+m_2+\cdots+m_k,$ to $\mathbb{F}_q^{m'k}$ as follows.

\begin{equation}\label{lambda1}
\begin{array}{llll}
\lambda_1 : & \mathbb{F}_q^n & \longrightarrow & \mathbb{F}_q^{m'k}\\
 & \mathbf{a} & \longmapsto & \left(\mathbf{a}^{(1)}|\mathbf{a}^{(2)}|\cdots|\mathbf{a}^{(k)}\right),
\end{array}
\end{equation}
with
\begin{equation}\label{copy}
\mathbf{a}^{(i)}=\underbrace{\left(\mathbf{a}_i|\mathbf{a}_i|\cdots|\mathbf{a}_i\right)}_{n_i},
\end{equation}
where $n_i=\displaystyle{\frac{m'}{m_i}}.$ {\it Second}, let $\mathbf{a}^{(i)}=\left(a_{i1},a_{i2},\dots,a_{im'}\right),$ and define a map from $\lambda_1\left(\mathbb{F}_q^{n}\right)$ to $\mathbb{F}_q^{m'k}$ as follows.

\begin{equation}\label{lambda2}
\begin{array}{llll}
\lambda_2 : & \lambda_1\left(\mathbb{F}_q^{n}\right) & \longrightarrow & \mathbb{F}_q^{m'k}\\
 & \left(\mathbf{a}^{(1)}|\mathbf{a}^{(2)}|\cdots|\mathbf{a}^{(k)}\right) & \longmapsto & \left(\mathbf{a}_{(1)}|\mathbf{a}_{(2)}|\cdots|\mathbf{a}_{(m')}\right),
\end{array}
\end{equation}
where $\mathbf{a}_{(j)}=(a_{1j},a_{2j},\dots,a_{kj}),$ for all $j=1,2,\dots,m'.$ Now, we shall define a map from $\mathbb{F}_q^n$ to $\mathbb{F}_q^{m'k}$ as follows.

\begin{equation}\label{lambda}
\begin{array}{llll}
\lambda : & \mathbb{F}_q^{n} & \longrightarrow & \mathbb{F}_q^{m'k}\\
& \mathbf{a} & \longmapsto & \lambda_2(\lambda_1(\mathbf{a})).
\end{array}
\end{equation}

We have the following proposition related to the map $\lambda.$

\begin{proposition}
	If $C$ is a $\sigma$-code of length $n,$ then $\lambda(C)$ is a quasi-cyclic code of length $m'k$ with index $k.$
\end{proposition}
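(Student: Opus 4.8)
The plan is to show that the map $\lambda$ intertwines the $\sigma$-shift $T_\sigma$ on $\mathbb{F}_q^n$ with the block shift by $k$ positions on $\mathbb{F}_q^{m'k}$; once this commutation is established, the quasi-cyclicity of $\lambda(C)$ follows at once from the $T_\sigma$-invariance of $C$. Write $S$ for the operator on $\mathbb{F}_q^{m'k}$ which, after grouping the coordinates into $m'$ consecutive blocks of length $k$, cyclically shifts these blocks by one position; equivalently $S$ shifts a vector by $k$ coordinates. A linear code of length $m'k$ is quasi-cyclic of index $k$ precisely when it is invariant under $S$, so the whole argument reduces to the identity
\[ \lambda \circ T_\sigma = S \circ \lambda. \]

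First I would record that, with the normalization $\sigma_i=(1+\sum_{j=1}^{i-1}m_j,\dots,\sum_{j=1}^{i}m_j)$, the action $T_\sigma$ is just the blockwise cyclic shift: if $\mathbf{a}=(\mathbf{a}_1|\cdots|\mathbf{a}_k)$ then $T_\sigma(\mathbf{a})=(T(\mathbf{a}_1)|\cdots|T(\mathbf{a}_k))$, each $T$ acting on its own block of length $m_i$. The key observation is how this interacts with the replication in $\lambda_1$: since $\mathbf{a}^{(i)}$ consists of $n_i=m'/m_i$ copies of $\mathbf{a}_i$, it is a period-$m_i$ vector of length $m'$, and because $m_i\mid m'$ a single cyclic shift of the length-$m'$ vector $\mathbf{a}^{(i)}$ is again period-$m_i$ with one period equal to $T(\mathbf{a}_i)$. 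Hence the length-$m'$ cyclic shift $T(\mathbf{a}^{(i)})$ coincides with the replication of $T(\mathbf{a}_i)$, giving $\lambda_1(T_\sigma(\mathbf{a}))=(T(\mathbf{a}^{(1)})|\cdots|T(\mathbf{a}^{(k)}))$. In words, replication promotes each period-$m_i$ shift to a genuine length-$m'$ cyclic shift, so after $\lambda_1$ all $k$ tracks shift in lockstep.

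Next I would push this through the interleaving $\lambda_2$. Writing $\mathbf{a}^{(i)}=(a_{i1},\dots,a_{im'})$, the length-$m'$ cyclic shift carries $a_{ij}$ to position $j+1$, so the $j$-th interleaved block $\mathbf{a}_{(j)}=(a_{1j},\dots,a_{kj})$ of $\lambda(\mathbf{a})$ is sent to block $j+1$ (indices mod $m'$). This is exactly the block shift $S$, so $\lambda_2$ applied to $\lambda_1(T_\sigma(\mathbf{a}))$ equals $S(\lambda(\mathbf{a}))$, which establishes the intertwining identity; one only needs to note that $\lambda_1(T_\sigma(\mathbf{a}))$ lies in the domain $\lambda_1(\mathbb{F}_q^n)$ of $\lambda_2$, which it does as it is $\lambda_1$ of a vector. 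I would then finish by invariance: $\lambda$ and $S$ are $\mathbb{F}_q$-linear, $C$ is linear with $T_\sigma(C)=C$, so $S(\lambda(C))=\lambda(T_\sigma(C))=\lambda(C)$, proving that $\lambda(C)$ is an $S$-invariant linear code of length $m'k$, i.e.\ quasi-cyclic of index $k$.

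I expect the main obstacle to be bookkeeping rather than anything conceptual: the one genuinely substantive step is verifying that replicating $\mathbf{a}_i$ into $n_i$ copies converts the period-$m_i$ shift into a clean length-$m'$ cyclic shift, where the divisibility $m_i\mid m'$ is exactly what makes it work, and then tracking indices carefully enough through $\lambda_2$ to identify the resulting permutation with the shift by precisely $k$ coordinates. Keeping the two index conventions straight, namely the within-track index $j\in\{1,\dots,m'\}$ and the across-track index $i\in\{1,\dots,k\}$, is where an error is most likely to creep in.
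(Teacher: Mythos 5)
Your proof is correct and takes essentially the same route as the paper: the paper's entire argument is the one-line assertion that $\lambda\left(T_\sigma(\mathbf{a})\right)=T^k(\lambda(\mathbf{a}))$ followed by invariance of $C$ under $T_\sigma$. You have simply carried out the verification of that intertwining identity (replication turning the period-$m_i$ shift into a length-$m'$ cyclic shift, then interleaving turning $m'$ parallel shifts into a shift by $k$ coordinates) that the paper leaves to the reader.
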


\begin{proof}
	We can check that $\lambda\left(T_\sigma(\mathbf{a})\right)=T^k(\lambda(\mathbf{a})).$ Therefore, if $T_\sigma(\mathbf{a})\in C,$ then $T^k(\lambda(\mathbf{a}))\in \lambda(C).$
\end{proof}

{\it Third}, any $\mathbf{b}\in \mathbb{F}_q^{m'k},$ can be written as
\[\mathbf{b}=\left(b_{11},b_{12},\dots,b_{1k},\dots,b_{m'1},b_{m'2},\dots,b_{m'k}\right).\]
Now, define a map from $\mathbb{F}_q^{m'k}$ to $M'=\displaystyle{\frac{\mathbb{F}_q[Y]}{\langle Y^{m'}-1\rangle}\times\cdots\times\frac{\mathbb{F}_q[Y]}{\langle Y^{m'}-1\rangle}}$ as follows.

\begin{equation}\label{phi}
\begin{array}{llll}
\phi : & \mathbb{F}_q^{m'k} & \longrightarrow & M'\\
& \mathbf{b} & \longmapsto & \left(b_1(Y),b_2(Y),\dots,b_k(Y)\right),
\end{array}
\end{equation}
where $b_i(Y)=\sum_{j=0}^{m'-1}b_{(j+1)i}Y^j,$ for all $i=1,2,\dots,k.$ The map $\phi$ is a {\it one-to-one} correspondence between quasi-cyclic codes of length $m'k$ and $\displaystyle{\frac{\mathbb{F}_q[Y]}{\langle Y^{m'}-1\rangle}}$-submodules of $M',$ see \cite{ling,ling2} for more details. By composing $\lambda$ and $\phi,$ we have the following map.

\begin{equation}\label{miu}
\begin{array}{llll}
\mu : & \mathbb{F}_q^{n} & \longrightarrow & M'\\
& \mathbf{a} & \longmapsto & \phi(\lambda(\mathbf{a})).
\end{array}
\end{equation}

For our convenience, we shall define the following notion.
\begin{definition}\label{coef}
	A vector $\mathbf{a}=(a_1,\dots,a_t)$ in $\mathbb{F}_q^t$ is said to be the {\it coefficients vector} for a polynomial $f(Y)$ if $f(Y)=\sum_{i=0}^{t-1}a_{i+1}Y^i.$
\end{definition}

We have the following properties related to the image of $\mu.$

\begin{lemma}\label{imagemiu1}
	If $\mu(\mathbf{a})=\left(a_1(Y),\dots,a_k(Y)\right),$ then $a_i(Y)=f_i(Y)\sum_{j=0}^{n_i-1}Y^{jm_i},$ for some $f_i(Y)\in\displaystyle{\frac{\mathbb{F}_q[Y]}{\langle Y^{m_i}-1\rangle}}$ with coefficients vector $\mathbf{a}_i,$ where $n_i=\displaystyle{\frac{m'}{m_i}},$ for all $i=1,2,\dots,k.$
\end{lemma}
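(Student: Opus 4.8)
The plan is to track a generic vector $\mathbf{a}=(\mathbf{a}_1|\cdots|\mathbf{a}_k)$, with $\mathbf{a}_i\in\mathbb{F}_q^{m_i}$, through the three maps $\lambda_1,\lambda_2,\phi$ whose composition is $\mu$, and then to recognize the resulting coefficient sequence as the claimed product. First I would unwind $\lambda_1$: by~(\ref{copy}) the $i$-th block $\mathbf{a}^{(i)}=(a_{i1},\dots,a_{im'})$ is $n_i=m'/m_i$ consecutive copies of $\mathbf{a}_i$, so its entries are $m_i$-periodic, namely $a_{i,l}=a_{i,\,((l-1)\bmod m_i)+1}$ for all $l=1,\dots,m'$, with the first period $a_{i,1},\dots,a_{i,m_i}$ being exactly $\mathbf{a}_i$. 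This periodicity is the only structural feature of $\lambda_1(\mathbf{a})$ that the argument uses.

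Next I would push through $\lambda_2$ and $\phi$. Since $\lambda_2$ in~(\ref{lambda2}) merely interleaves coordinates, reading off the $i$-th component of $\phi(\lambda(\mathbf{a}))$ gives directly
\[
a_i(Y)=\sum_{j=0}^{m'-1}a_{i,j+1}\,Y^j ,
\]
where the $a_{i,j+1}$ are the $m_i$-periodic entries found above. Writing $f_i(Y)=\sum_{s=0}^{m_i-1}a_{i,s+1}Y^s$ for the polynomial whose coefficients vector is $\mathbf{a}_i$ in the sense of Definition~\ref{coef}, the periodicity says precisely that the coefficient of $Y^j$ in $a_i(Y)$ is $a_{i,(j\bmod m_i)+1}$.

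The key step is to recognize this pattern as a product. I would expand
\[
f_i(Y)\sum_{l=0}^{n_i-1}Y^{lm_i}=\sum_{l=0}^{n_i-1}\sum_{s=0}^{m_i-1}a_{i,s+1}\,Y^{lm_i+s}
\]
and invoke the division algorithm: the map $(l,s)\mapsto lm_i+s$ is a bijection from $\{0,\dots,n_i-1\}\times\{0,\dots,m_i-1\}$ onto $\{0,\dots,m'-1\}$, and for $j=lm_i+s$ one has $s=j\bmod m_i$. Hence the double sum contributes exactly one term $a_{i,(j\bmod m_i)+1}Y^j$ for each $j$ with $0\le j\le m'-1$, matching the formula for $a_i(Y)$ term by term. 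Because every exponent that occurs is at most $m'-1$, no reduction modulo $Y^{m'}-1$ intervenes, so the identity is already valid in $\mathbb{F}_q[Y]/\langle Y^{m'}-1\rangle$ as stated.

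I do not anticipate a genuine obstacle: the statement is bookkeeping once the maps are written out explicitly. The only place demanding care is keeping the two index ranges separate---the local index $s\in\{0,\dots,m_i-1\}$ over a single period against the global exponent $j\in\{0,\dots,m'-1\}$---and verifying that $(l,s)\mapsto lm_i+s$ is genuinely a bijection onto $\{0,\dots,m'-1\}$, since it is this bijectivity, rather than a mere congruence, that turns the periodic coefficient pattern into an exact polynomial identity.
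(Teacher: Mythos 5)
Your proposal is correct and follows the same route as the paper's proof, which simply observes that the coefficients vector of $a_i(Y)$ is $\mathbf{a}^{(i)}$ and then invokes equation~(\ref{copy}); you have merely filled in the bookkeeping (the periodicity of $\mathbf{a}^{(i)}$ and the bijection $(l,s)\mapsto lm_i+s$) that the paper leaves to the reader. No gaps.
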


\begin{proof}
	We can check that the coefficients vector for $a_i(Y)$ is $\mathbf{a}^{(i)}.$ By equation~\ref{copy}, we have that
	\[a_i(Y)=f_i(Y)\sum_{j=0}^{n_i-1}Y^{jm_i},\]
	for some $f_i(Y)\in\displaystyle{\frac{\mathbb{F}_q[Y]}{\langle Y^{m_i}-1\rangle}}$ with coefficients vector $\mathbf{a}_i,$ where $n_i=\displaystyle{\frac{m'}{m_i}}.$
\end{proof}

\begin{proposition}\label{imagemiu2}
A code $C$ is a $\sigma$-code of length $n$ over $\mathbb{F}_q$ if and only if $\mu(C)$ is a $\displaystyle{\frac{\mathbb{F}_q[Y]}{\langle Y^{m'}-1\rangle}}$-submodules of $M',$ where for any $\mathbf{c}$ in $\mu(C)$ with $\mathbf{c}=(c_1(Y),\dots,c_{k}(Y)),$ we have
\[c_i(Y)=f_i(Y)\sum_{j=0}^{n_i-1}Y^{m_ij},\]
for some $f_i(Y)\in\displaystyle{\frac{\mathbb{F}_q[Y]}{\langle Y^{m_i}-1\rangle}}$ with coefficients vector $\mathbf{c}_i\in\mathbb{F}_q^{m_i},$ for all $i=1,2,\dots,k.$ 	
\end{proposition}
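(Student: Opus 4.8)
The plan is to prove Proposition~\ref{imagemiu2} as a biconditional, using the machinery already assembled. The forward direction rests on combining the earlier results about $\mu=\phi\circ\lambda$: I would first observe that if $C$ is a $\sigma$-code, then by the proposition preceding Definition~\ref{coef}, $\lambda(C)$ is a quasi-cyclic code of length $m'k$ with index $k$; and then, since $\phi$ gives a one-to-one correspondence between quasi-cyclic codes of length $m'k$ and $\frac{\mathbb{F}_q[Y]}{\langle Y^{m'}-1\rangle}$-submodules of $M'$, the image $\mu(C)=\phi(\lambda(C))$ is such a submodule. The special form of the components $c_i(Y)$ is then immediate from Lemma~\ref{imagemiu1}: every element $\mathbf{c}=\mu(\mathbf{a})$ of $\mu(C)$ satisfies $c_i(Y)=f_i(Y)\sum_{j=0}^{n_i-1}Y^{m_ij}$ with $f_i$ having coefficients vector $\mathbf{c}_i\in\mathbb{F}_q^{m_i}$.

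The converse direction is the one requiring genuine care. Suppose $\mu(C)$ is a $\frac{\mathbb{F}_q[Y]}{\langle Y^{m'}-1\rangle}$-submodule of $M'$ whose elements all have the stated component form. I would argue that $\mu$ is injective, so that $C=\mu^{-1}(\mu(C))$ is well-defined, and then show that $C$ is closed under $T_\sigma$. The key identity to invoke is $\lambda(T_\sigma(\mathbf{a}))=T^k(\lambda(\mathbf{a}))$ established earlier, which under $\phi$ corresponds to multiplication by $Y$ in $M'$; hence $\mu(T_\sigma(\mathbf{a}))=Y\,\mu(\mathbf{a})$. Since $\mu(C)$ is a module, it is closed under multiplication by $Y$, so $Y\,\mu(\mathbf{a})\in\mu(C)$ whenever $\mathbf{a}\in C$; pulling back through the injectivity of $\mu$ gives $T_\sigma(\mathbf{a})\in C$, establishing that $C$ is a $\sigma$-code. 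Linearity of $C$ follows from $\mu$ being an $\mathbb{F}_q$-linear map together with $\mu(C)$ being closed under $\mathbb{F}_q$-scalar multiplication.

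The main obstacle, and the point I would treat most carefully, is verifying that the prescribed component form is exactly the condition characterizing the image of $\mu$, i.e.\ that $\mu(\mathbb{F}_q^n)$ consists precisely of the tuples $(c_1(Y),\dots,c_k(Y))$ with each $c_i$ divisible by the repetition polynomial $\sum_{j=0}^{n_i-1}Y^{m_ij}$ and the cofactor $f_i$ reduced modulo $Y^{m_i}-1$. This is what guarantees that a submodule satisfying the hypothesis actually lies in $\mathrm{im}(\mu)$, so that $\mu^{-1}$ makes sense on it. Concretely I would show that the map $\mathbf{a}_i\mapsto f_i(Y)\sum_{j=0}^{n_i-1}Y^{m_ij}$ is injective (the repeated-block structure in equation~\ref{copy} means the coefficients vector $\mathbf{a}^{(i)}$ recovers $\mathbf{a}_i$ from its first $m_i$ entries), giving the injectivity of $\mu$ componentwise, and that every tuple of the stated form is genuinely hit. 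Once this image characterization is pinned down, the module-to-code transfer via the $Y$-multiplication identity is routine.
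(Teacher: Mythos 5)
Your proposal is correct and follows essentially the same route as the paper, whose proof simply invokes Lemma~\ref{imagemiu1} for the component form together with the fact that $\phi$ carries the quasi-cyclic code $\lambda(C)$ to an $\displaystyle{\frac{\mathbb{F}_q[Y]}{\langle Y^{m'}-1\rangle}}$-submodule of $M'$. You merely make explicit the converse direction (via the identity $\mu(T_\sigma(\mathbf{a}))=Y\mu(\mathbf{a})$ and the injectivity of $\mu$), which the paper leaves implicit.
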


\begin{proof}
	Apply Lemma~\ref{imagemiu1} and the fact that $\phi(C)$ is a $\displaystyle{\frac{\mathbb{F}_q[Y]}{\langle Y^{m'}-1\rangle}}$-submodules of $M'.$
\end{proof}

Before we describe duality in $M',$ we need to show the following property.

\begin{proposition}\label{duallambda}
Let $C$ be a code of length $n$ over $\mathbb{F}_q$ and $C^\bot$ be its Euclidean dual. If $C_1=\{\mathbf{c}'\in\lambda(\mathbb{F}_q^{n})\left|\right. \mathbf{c}'\cdot\lambda(\mathbf{c})=0,\forall \mathbf{c}\in C\},$ then $C_1=\lambda(C^\bot).$
\end{proposition}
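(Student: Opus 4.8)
The plan is to use the linearity and injectivity of $\lambda$ to convert the defining condition of $C_1$ into an orthogonality condition back on $\mathbb{F}_q^n$, and then to check that this pulled-back orthogonality agrees with the Euclidean one. Since $\lambda$ is injective, every $\mathbf{c}'\in\lambda(\mathbb{F}_q^n)$ is $\lambda(\mathbf{d})$ for a unique $\mathbf{d}\in\mathbb{F}_q^n$, so that
\[
C_1=\lambda\left(\{\mathbf{d}\in\mathbb{F}_q^n\;:\;\lambda(\mathbf{d})\cdot\lambda(\mathbf{c})=0\text{ for all }\mathbf{c}\in C\}\right).
\]
Using injectivity once more, the desired identity $C_1=\lambda(C^\bot)$ is equivalent to the equality of the two subsets
\[
\{\mathbf{d}:\lambda(\mathbf{d})\cdot\lambda(\mathbf{c})=0\ \forall\,\mathbf{c}\in C\}=\{\mathbf{d}:\mathbf{d}\cdot\mathbf{c}=0\ \forall\,\mathbf{c}\in C\}=C^\bot,
\]
so the whole statement reduces to understanding how $\lambda$ transforms the standard dot product.

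First I would analyse the two factors of $\lambda=\lambda_2\circ\lambda_1$ separately. The map $\lambda_2$ merely rearranges the $m'k$ coordinates (it reads the $k\times m'$ array $(\mathbf{a}^{(1)}|\cdots|\mathbf{a}^{(k)})$ columnwise rather than rowwise), so it is a permutation of coordinates and hence preserves the dot product exactly. The map $\lambda_1$ is the one that repeats each block $\mathbf{a}_i\in\mathbb{F}_q^{m_i}$ a total of $n_i=m'/m_i$ times. Writing $\mathbf{a}=(\mathbf{a}_1|\cdots|\mathbf{a}_k)$ and tracking, for each $i$ and each coordinate of $\mathbf{a}_i$, how often the corresponding product reappears in $\lambda(\mathbf{a})\cdot\lambda(\mathbf{b})$, I expect to obtain the clean formula
\[
\lambda(\mathbf{a})\cdot\lambda(\mathbf{b})=\sum_{i=1}^{k}n_i\,(\mathbf{a}_i\cdot\mathbf{b}_i).
\]

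With this identity in hand the argument would split into two inclusions: for $\lambda(C^\bot)\subseteq C_1$ I would take $\mathbf{d}\in C^\bot$ and show the displayed sum vanishes against every $\mathbf{c}\in C$, while for the reverse inclusion I would combine a dimension count (using $\dim\lambda(C^\bot)=\dim C^\bot=n-\dim C$ together with the injectivity of $\lambda$) with the first inclusion. The main obstacle is precisely the point where the weights $n_i$ enter: the pulled-back form $\sum_i n_i(\mathbf{a}_i\cdot\mathbf{b}_i)$ is a priori only a weighted version of the Euclidean form, and its orthogonal complement coincides with $C^\bot$ only when the weights are compatible, namely when the $n_i$ are all congruent modulo the characteristic $p$ to a common nonzero value so that the two forms differ by a single invertible scalar. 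This holds in particular when all cycle lengths $m_i$ are equal (where every $n_i=1$), which recovers the familiar quasi-cyclic situation. I would therefore isolate this compatibility of the weights $n_i$ modulo $p$ as the genuinely delicate hinge of the proof and verify that, under it, the weighted and Euclidean orthogonalities define the same complement.
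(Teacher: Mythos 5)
Your skeleton is exactly the paper's: establish the inclusion $\lambda(C^\bot)\subseteq C_1$ and finish with the dimension count $\dim C_1=n-\dim\lambda(C)=\dim C^\bot$. The difference is that the paper simply writes ``we can see that $\lambda(C^\bot)\subseteq C_1$'' with no argument, whereas you actually compute what the ambient dot product pulls back to, and your formula
\[
\lambda(\mathbf{a})\cdot\lambda(\mathbf{b})=\sum_{i=1}^{k}n_i\,(\mathbf{a}_i\cdot\mathbf{b}_i),\qquad n_i=\tfrac{m'}{m_i},
\]
is correct: $\lambda_2$ is a coordinate permutation and $\lambda_1$ repeats the $i$-th block $n_i$ times. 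The ``delicate hinge'' you isolate is not a defect of your write-up but a genuine gap in the paper's proof: the asserted inclusion, and with it the proposition as stated, fails in general unless all the $n_i$ are congruent to one common nonzero residue modulo $p$. Concretely, take $q=2$ and $\sigma=(1\,2)(3\,4\,5)$, so $m'=6$, $n_1=3$, $n_2=2$, and let $C=\langle(1,1,1,1,1)\rangle$, which is a $\sigma$-code. Then $\mathbf{d}=(1,0,1,0,0)\in C^\bot$, but $\lambda(\mathbf{d})\cdot\lambda(\mathbf{c})=3(\mathbf{d}_1\cdot\mathbf{c}_1)+2(\mathbf{d}_2\cdot\mathbf{c}_2)=3+2=1\neq 0$ in $\mathbb{F}_2$, so $\lambda(C^\bot)\not\subseteq C_1$; indeed here $C_1=\lambda(\{\mathbf{d}\,:\,d_1+d_2=0\})\neq\lambda(C^\bot)$. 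Note also that when some $n_i\equiv 0\pmod p$ the restricted form is degenerate, and the paper's equality $\dim C_1=n-\dim\lambda(C)$ is then only the inequality $\dim C_1\geq n-\dim\lambda(C)$.

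So your plan is sound and, under the compatibility hypothesis you identify (all $n_i$ congruent to a common unit mod $p$, e.g.\ all $m_i$ equal, which is the quasi-cyclic case), it closes immediately: the pulled-back form is a unit multiple of the Euclidean one, so the two orthogonality conditions coincide and you do not even need the dimension count. Without that hypothesis no proof can succeed, because the statement itself needs amending; since the paper's later duality results (Proposition~\ref{product} and its corollary) lean on this proposition, the restriction you found propagates to them as well.
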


\begin{proof}
We can see that $\lambda(C^\bot)\subseteq C_1.$ Also, we have $dim(C_1)=n-dim(\lambda(C))=n-dim(C)=dim(C^\bot).$ Therefore, $C_1=\lambda(C^\bot).$
\end{proof}
Proposition~\ref{duallambda} shows that any $\mathbf{c}'$ in $\lambda(\mathbb{F}_q^n),$ which satisfies $\mathbf{c}'\cdot\lambda(\mathbf{c})=0,$ for all $\mathbf{c}$ in $C,$ then $\mathbf{c}'\in \lambda(C^\bot).$

Now, define a conjugation map, denoted by $^-,$ on $\displaystyle{\frac{\mathbb{F}_q[Y]}{\langle Y^{m'}-1\rangle}},$ where $\overline{\alpha}=\alpha,$ for all $\alpha$ in $\mathbb{F}_q,$ and $\overline{Y}=Y^{m'-1}.$ Also, define Hermitian inner product on $M'$ as follows: for $\mathbf{a}=(a_1,\dots,a_k)$ and $\mathbf{b}=(b_1,\dots,b_k)$ in $M',$
\[\langle\mathbf{a},\mathbf{b}\rangle=\sum_{i=1}^ka_i\overline{b_i}.\]

We have the following proposition.
\begin{proposition}\label{product}
	Let $\mathbf{a},\mathbf{b}\in\mathbb{F}_q^n.$ Then, $T_{\sigma}^{j}(\mathbf{a})\cdot \mathbf{b}=0,$ for all $0\leq j\leq \omega-1,$ if and only if $\langle\mu(\mathbf{a}),\mu(\mathbf{b}) \rangle=0.$
\end{proposition}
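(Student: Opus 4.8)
The plan is to reduce the vanishing of the single ring element $\langle\mu(\mathbf{a}),\mu(\mathbf{b})\rangle\in\mathbb{F}_q[Y]/\langle Y^{m'}-1\rangle$ to the vanishing of each of its $m'$ coefficients, and then to identify the coefficient of $Y^{s}$ with a Euclidean pairing of a $\sigma$-shift of $\mathbf{a}$ against $\mathbf{b}$. Since $\sigma=\sigma_1\cdots\sigma_k$ has order $m'=\lcm(m_1,\dots,m_k)$, I read $\omega=m'$, so that the index $j$ in $0\le j\le\omega-1$ runs over exactly one full period of $T_\sigma$, hence over all $m'$ coefficient positions. The engine driving the identification is the relation $\mu(T_\sigma(\mathbf{a}))=Y\,\mu(\mathbf{a})$, which follows from $\lambda(T_\sigma(\mathbf{a}))=T^{k}(\lambda(\mathbf{a}))$ together with the fact that $\phi$ turns the block shift $T^{k}$ into multiplication by $Y$ on $M'$.

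First I would observe that $Y$ is a unit in $\mathbb{F}_q[Y]/\langle Y^{m'}-1\rangle$ and that $\langle Y^{-s}\mu(\mathbf{a}),\mu(\mathbf{b})\rangle=Y^{-s}\langle\mu(\mathbf{a}),\mu(\mathbf{b})\rangle$, so the coefficient of $Y^{s}$ in $\langle\mu(\mathbf{a}),\mu(\mathbf{b})\rangle$ is the constant coefficient of $\langle\mu(T_\sigma^{-s}(\mathbf{a})),\mu(\mathbf{b})\rangle$. It therefore suffices to compute one clean quantity, namely the constant coefficient of $\langle\mu(\mathbf{x}),\mu(\mathbf{y})\rangle$. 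Expanding with the conjugation $\overline{Y}=Y^{m'-1}=Y^{-1}$, this constant term collects the diagonal products of the coefficient sequences of $\mu(\mathbf{x})$ and $\mu(\mathbf{y})$; since $\mu=\phi\circ\lambda$ and $\phi$ is the standard quasi-cyclic correspondence under which the constant term of a Hermitian product equals a Euclidean dot product, the quantity produced is the pairing $\lambda(\mathbf{x})\cdot\lambda(\mathbf{y})$ on $\mathbb{F}_q^{m'k}$. Letting $s$ range over $0,\dots,m'-1$ then converts the statement ``every coefficient of $\langle\mu(\mathbf{a}),\mu(\mathbf{b})\rangle$ vanishes'' into ``the pairings $\lambda(T_\sigma^{j}(\mathbf{a}))\cdot\lambda(\mathbf{b})$ vanish for all $j$'', again using $\lambda(T_\sigma(\mathbf{a}))=T^{k}(\lambda(\mathbf{a}))$.

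The step I expect to be delicate — and where I would spend the most care — is matching the constant coefficient of the Hermitian product to the Euclidean product $T_\sigma^{j}(\mathbf{a})\cdot\mathbf{b}$ on $\mathbb{F}_q^{n}$ rather than merely to the pairing on the enlarged space $\mathbb{F}_q^{m'k}$. The composite $\mu=\phi\circ\lambda$ routes through the replication map $\lambda_1$ of \eqref{copy}, which copies each block $\mathbf{a}_i$ exactly $n_i=m'/m_i$ times; concretely, writing $a_i(Y)=f_i(Y)\sum_{r=0}^{n_i-1}Y^{rm_i}$ as in Lemma~\ref{imagemiu1}, the factor $P_i(Y)=\sum_{r=0}^{n_i-1}Y^{rm_i}$ satisfies $\overline{P_i}=P_i$ and $P_i^{2}=n_iP_i$ in the quotient, so the repetition multiplicities $n_i$ resurface as block weights, giving $\lambda(\mathbf{a})\cdot\lambda(\mathbf{b})=\sum_{i=1}^{k}n_i(\mathbf{a}_i\cdot\mathbf{b}_i)$. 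The heart of the argument is to verify that these weights do not disturb the equivalence with the unweighted product $T_\sigma^{j}(\mathbf{a})\cdot\mathbf{b}$; for this I would pass to the level of the whole code and invoke Proposition~\ref{duallambda}, which identifies $\{\mathbf{c}'\in\lambda(\mathbb{F}_q^{n}):\mathbf{c}'\cdot\lambda(\mathbf{c})=0\ \forall\mathbf{c}\in C\}$ with $\lambda(C^\bot)$, so that the $\lambda$-weighted orthogonality and the plain orthogonality cut out dual spaces of the same dimension. Establishing that reconciliation carefully is, in my view, the crux of the proof; the unit-shift reduction and the remaining coefficient bookkeeping are routine once it is in place.
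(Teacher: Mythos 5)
Your route is essentially the paper's: pass through $\lambda$ to a length-$m'k$ quasi-cyclic setting, identify the coefficients of the Hermitian product $\langle\mu(\mathbf{a}),\mu(\mathbf{b})\rangle$ with the Euclidean pairings $\lambda(T_\sigma^{j}(\mathbf{a}))\cdot\lambda(\mathbf{b})$ (this is the content of the cited Proposition 3.2 of Ling--Sol\'e, which you re-derive correctly by the shift-and-read-off-the-constant-term argument), and then convert those pairings into $T_\sigma^{j}(\mathbf{a})\cdot\mathbf{b}$. The paper disposes of that last conversion with the words ``we can see that''; you, rightly, flag it as the crux. But your proposal does not actually close it, and the way you propose to close it cannot work. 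Your own computation $\lambda(\mathbf{x})\cdot\lambda(\mathbf{y})=\sum_{i=1}^{k}n_i(\mathbf{x}_i\cdot\mathbf{y}_i)$ is correct, and it differs from $\mathbf{x}\cdot\mathbf{y}=\sum_{i=1}^{k}\mathbf{x}_i\cdot\mathbf{y}_i$ by the weights $n_i=m'/m_i$, which live in $\mathbb{F}_q$ and may vanish. Invoking Proposition~\ref{duallambda} cannot repair this: the statement to be proved concerns two fixed vectors $\mathbf{a},\mathbf{b}$, not a code, so ``passing to the level of the whole code'' changes the problem; and a dimension count comparing two annihilator subspaces can never convert a weighted bilinear form into an unweighted one pointwise. (Indeed the proof of Proposition~\ref{duallambda} itself silently assumes the very containment $\lambda(C^\bot)\subseteq C_1$ that hinges on the same weighted-versus-unweighted issue, so leaning on it is circular.)

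The step is in fact false without extra hypotheses, so no amount of care completes your plan as written. Take $q=2$, $\sigma=(1\;2)(3)$, so $m_1=2$, $m_2=1$, $m'=2$, $n_1=1$, $n_2=2$, and let $\mathbf{a}=\mathbf{b}=(0,0,1)$. Then $T_\sigma^{j}(\mathbf{a})\cdot\mathbf{b}=1\neq 0$ for every $j$, while $\lambda(\mathbf{a})=\lambda(\mathbf{b})=(0,1,0,1)$, $\mu(\mathbf{a})=\mu(\mathbf{b})=(0,\,1+Y)$, and
\[
\langle\mu(\mathbf{a}),\mu(\mathbf{b})\rangle=(1+Y)(1+Y^{m'-1})=(1+Y)^2=1+Y^2=0
\]
in $\mathbb{F}_2[Y]/\langle Y^2-1\rangle$. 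So the block whose multiplicity $n_2=2$ is divisible by $p$ is simply erased by $\lambda$, and both implications of the proposition fail. The honest conclusion is that your reduction is valid up to the identity $\langle\mu(\mathbf{a}),\mu(\mathbf{b})\rangle=0\iff\lambda(T_\sigma^{j}(\mathbf{a}))\cdot\lambda(\mathbf{b})=0$ for all $j$, but the final equivalence with $T_\sigma^{j}(\mathbf{a})\cdot\mathbf{b}=0$ requires an additional hypothesis such as $p\nmid n_i$ for all $i$ (for instance, all $m_i$ equal, the quasi-cyclic case). You have in effect located a gap in the paper's own one-line justification rather than a gap you can fill by bookkeeping.
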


\begin{proof}
	We can see that, $T^{\alpha k}(\lambda(\mathbf{a}))\cdot\lambda(\mathbf{b})=0,$ for all $0\leq \alpha\leq m'-1$ if and only if $T_\sigma^j(\mathbf{a})\cdot\mathbf{b}=0,$ for all $0\leq j\leq m'-1.$ By \cite[Proposition 3.2]{ling2}, $T^{\alpha k}(\lambda(\mathbf{a}))\cdot\lambda(\mathbf{b})=0,$ for all $0\leq \alpha\leq m'-1$ if and only if $\langle \mu(\mathbf{a}),\mu(\mathbf{b})\rangle=0,$ as we hope.
\end{proof}

As a consequence, we have the following result.

\begin{corollary}
	If $C$ be an $\sigma$-code of length $n$ over $\mathbb{F}_q,$ $\varphi(C)$ is its image under the map $\varphi,$ and
\[C_2=\{\mathbf{c}'\in\mu(\mathbb{F}_q^n)\left|\right. \langle\mu(\mathbf{c}),\mathbf{c}'\rangle=0,\forall\mathbf{c}\in C\},\]
 then
	\begin{itemize}
		\item[(i)] The equation $\mu\left(C^\bot\right)=C_2$ holds, and
		\item[(ii)]  the code $C$ is Euclidean self-dual over $\mathbb{F}_q$ if and only if the code $\mu(C)$ is Hermitian self-dual over $\mathbb{F}_q[Y]$ in $\mu(\mathbb{F}_q^n).$
	\end{itemize}
\end{corollary}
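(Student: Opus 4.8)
The plan is to establish part (i) first and then read off part (ii) as an immediate consequence. The whole argument rests on Proposition~\ref{product}, which converts the single Hermitian condition $\langle\mu(\mathbf{c}),\mu(\mathbf{b})\rangle=0$ into the family of Euclidean conditions $T_\sigma^{j}(\mathbf{c})\cdot\mathbf{b}=0$ for $0\le j\le m'-1$, combined with the defining $\sigma$-invariance of $C$.

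For (i), I would prove both inclusions through a single chain of equivalences. Fix $\mathbf{b}\in\mathbb{F}_q^n$; since $\mu$ is injective (being the composite of the injective map $\lambda$ and the correspondence $\phi$), every element of $\mu(\mathbb{F}_q^n)$ is uniquely of the form $\mu(\mathbf{b})$, so $\mu(\mathbf{b})\in C_2$ means exactly that $\langle\mu(\mathbf{c}),\mu(\mathbf{b})\rangle=0$ for all $\mathbf{c}\in C$. By Proposition~\ref{product} this is equivalent to $T_\sigma^{j}(\mathbf{c})\cdot\mathbf{b}=0$ for all $\mathbf{c}\in C$ and all $0\le j\le m'-1$. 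The forward implication is then clear: if $\mathbf{b}\in C^\bot$, then because $C$ is a $\sigma$-code each $T_\sigma^{j}(\mathbf{c})$ again lies in $C$, so every such inner product vanishes and $\mu(\mathbf{b})\in C_2$. For the converse, specializing the family to $j=0$ yields $\mathbf{c}\cdot\mathbf{b}=0$ for all $\mathbf{c}\in C$, that is $\mathbf{b}\in C^\bot$. Hence $\mu(\mathbf{b})\in C_2$ if and only if $\mathbf{b}\in C^\bot$, which is precisely $C_2=\mu(C^\bot)$.

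For (ii), I would first identify $C_2$ with the Hermitian dual of $\mu(C)$ taken inside $\mu(\mathbb{F}_q^n)$. Because the Hermitian form satisfies $\langle\mathbf{a},\mathbf{b}\rangle=0$ if and only if $\langle\mathbf{b},\mathbf{a}\rangle=0$ (the two vanishing conditions differing only by conjugation), the set $C_2=\{\mathbf{c}'\in\mu(\mathbb{F}_q^n)\mid\langle\mu(\mathbf{c}),\mathbf{c}'\rangle=0,\ \forall\mathbf{c}\in C\}$ is exactly the Hermitian dual of $\mu(C)$ relative to $\mu(\mathbb{F}_q^n)$. Now if $C$ is Euclidean self-dual, then $C=C^\bot$, and applying the injective map $\mu$ together with (i) gives $\mu(C)=\mu(C^\bot)=C_2$, so $\mu(C)$ equals its Hermitian dual in $\mu(\mathbb{F}_q^n)$; the reverse implication runs backwards along the same equalities, using injectivity of $\mu$ to cancel it from $\mu(C)=\mu(C^\bot)$ and recover $C=C^\bot$.

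I expect the only delicate point to be the bookkeeping in (ii): making sure the conjugation in the Hermitian form is placed so that $C_2$ really equals the Hermitian dual of $\mu(C)$ rather than its conjugate, and that the ambient space for the Hermitian duality is $\mu(\mathbb{F}_q^n)$ and not all of $M'$. Both issues are resolved by the conjugate-symmetry of the form noted above and by the injectivity of $\mu$, so no genuinely new computation beyond Proposition~\ref{product} should be required.
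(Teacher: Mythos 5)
Your argument is correct and follows essentially the same route as the paper: the paper's proof is a one-line citation of Proposition~\ref{duallambda} together with Corollary~3.3 of Ling--Sol\'e, while you reach the same conclusion by invoking the paper's own Proposition~\ref{product} (itself built on those ingredients) and supplying the chain of equivalences explicitly, including the $\sigma$-invariance of $C$ for the forward inclusion, the specialization $j=0$ for the reverse one, and the conjugate-symmetry of the Hermitian form needed to identify $C_2$ with the Hermitian dual of $\mu(C)$ inside $\mu(\mathbb{F}_q^n)$. Your write-up is in fact more complete than the paper's, which leaves all of these steps implicit.
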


\begin{proof}
	Apply Proposition~\ref{duallambda} and \cite[Corollary 3.3]{ling2}.
\end{proof}
Recall that, by Proposition~\ref{imagemiu2}, a $\sigma$-code $C$ is Euclidean self-dual if and only if $\mu(C)$ is Hermitian self-dual over $\mathbb{F}_q[Y]$ in $\mu(\mathbb{F}_q^n),$ where for any $c(Y)\in \mu(C),$ with $\mathbf{c}=(c_1(Y),\dots,c_{k}(Y)),$ we have
\[c_i(Y)=f_i(Y)\sum_{j=0}^{n_i-1}Y^{m_ij},\]
for some $f_i(Y)\in\displaystyle{\frac{\mathbb{F}_q[Y]}{\langle Y^{m_i}-1\rangle}}$ with coefficients vector $\mathbf{c}_i\in\mathbb{F}_q^{m_i},$ for all $i=1,2,\dots,k.$

\subsection{More on Algebraic Structure}

In Subsection~\ref{duality}, we show that we can 'put' $\sigma$-codes of length $n$ over $\mathbb{F}_q,$ where $n=\sum_{i=1}^km_i,$ inside quasi-cyclic codes of length $m'=\lcm(m_1,\dots,m_k)$ over $\mathbb{F}_q.$ Specifically, any $\sigma$-code of length $n$ over $\mathbb{F}_q$ can be considered as a submodule of $\left(\displaystyle{\frac{\mathbb{F}_q[Y]}{\langle Y^{m'}-1\rangle}}\right)^k$ over $\displaystyle{\frac{\mathbb{F}_q[Y]}{\langle Y^{m'}-1\rangle}}$ with some additional conditions in its coordinates. In this part, we will describe more explicit form for these specific submodules. \\[0.3cm]

Let $q=p^r,$ for some prime number $p$ and positive integer $r\geq 1.$ Also, let $m'=p^a\dot{m},$ where $gcd(p,\dot{m})=1.$ The polynomial $Y^{\dot{m}}-1$ factors completely into distinct irreducible factors in $\mathbb{F}_q[Y]$ as follows.

\begin{equation}\label{factor}
Y^{\dot{m}}-1=\delta g_1\dots g_sh_1h_1^*\dots h_th_t^*,
\end{equation}
where $\delta\in\mathbb{F}_q^\times,$ $g_1,\dots,g_s$ are self-reciprocal factors, and $h_j,h_j^*$ are reciprocal pair for all $j=1,2,\dots,t.$ Now, we have

\begin{equation}\label{factor2}
Y^{m'}-1=Y^{p^a\dot{m}}-1=\delta^{p^a} g_1^{p^a}\dots g_s^{p^a}h_1^{p^a}(h_1^*)^{p^a}\dots h_t^{p^a}(h_t^*)^{p^a}.
\end{equation}

As a consequence, we have

\begin{equation}\label{decomp}
\displaystyle{\frac{\mathbb{F}_q[Y]}{\langle Y^{m'}-1\rangle}}=\left(\bigoplus_{i=1}^sG_i\right)\oplus\left(\bigoplus_{j=1}^tH_j\oplus H_j'\right),
\end{equation}
where $G_i=\displaystyle{\frac{\mathbb{F}_q[Y]}{\langle g_i^{p^a}\rangle}},$ for all $i=1,2,\dots,s,$ $H_j=\displaystyle{\frac{\mathbb{F}_q[Y]}{\langle h_j^{p^a}\rangle}}$ and $H_j'=\displaystyle{\frac{\mathbb{F}_q[Y]}{\langle (h_j^*)^{p^a}\rangle}},$ for all $j=1,2,\dots,t.$ So, from \ref{decomp}, we have

\begin{equation}\label{decomp2}
\left(\displaystyle{\frac{\mathbb{F}_q[Y]}{\langle Y^{m'}-1\rangle}}\right)^k= \left(\bigoplus_{i=1}^sG_i^k\right)\oplus\left(\bigoplus_{j=1}^tH_j^k\oplus (H_j')^k\right).
\end{equation}
Therefore, any submodule $C$ of $\left(\displaystyle{\frac{\mathbb{F}_q[Y]}{\langle Y^{m'}-1\rangle}}\right)^k$ over $\displaystyle{\frac{\mathbb{F}_q[Y]}{\langle Y^{m'}-1\rangle}}$ can be decomposed as

\begin{equation}\label{decomp3}
C=\left(\bigoplus_{i=1}^sC_i\right)\oplus\left(\bigoplus_{j=1}^t(C_j'\oplus C_j'')\right),
\end{equation}
where $C_i$ is a submodule of $G_i^k$ over $G_i,$ for all $i=1,2,\dots,s,$ $C_j'$ and $C_j''$ are submodules of $H_j^k$ over $H_j$ and $(H_j')^k$ over $H_j',$ respectively, for all $j=1,2,\dots,t.$

Now, let $\mathbf{b}_i=\sum_{j=0}^{n_i-1}Y^{jm_i},$ where $n_i=\displaystyle{\frac{m'}{m_i}},$ and $\mathbf{b}_{if}=\mathbf{b}_i\mod f(Y).$ Therefore, we have the following results.

\begin{theorem}\label{sigmadecomp}
A code $C$ is a $\sigma$-code of length $n$ over $\mathbb{F}_q$ if and only if
\[\mu(C)=\left(\bigoplus_{i=1}^sC_i\right)\oplus\left(\bigoplus_{j=1}^t(C_j'\oplus C_j'')\right),\]
where
\begin{itemize}
\item[(i)] For any $\mathbf{c}_i\in C_i\leq G_i^k,$ $\mathbf{c}_i=(c_{i1},\dots,c_{ik}),$ where $c_{il}=f_l\mathbf{b}_{lg_i},$ for some $f_l\in G_i,$ for all $l=1,2,\dots,k$ and $i=1,2,\dots,s,$ and

\item[(ii)] For any $\mathbf{c}_j'\in C_j'\leq H_j^k$ and $\mathbf{c}_j''\in C_j''\leq (H_j')^k,$  $\mathbf{c}_j'=(c_{j1}',\dots,c_{jk}')$ and $\mathbf{c}_j''=(c_{j1}'',\dots,c_{jk}''),$ where $c_{jl}'=f_l'\mathbf{b}_{lh_j}$ and $c_{jl}''=f_l''\mathbf{b}_{lh_j'},$ for some $f_l'\in H_j$ and $f_l''\in H_j',$ for all $l=1,2,\dots,k$ and $j=1,2,\dots,t.$
\end{itemize}
\end{theorem}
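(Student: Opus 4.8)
The plan is to combine the coordinate-wise description of $\mu(C)$ from Proposition~\ref{imagemiu2} with the Chinese Remainder decomposition recorded in \eqref{decomp}--\eqref{decomp3}. By Proposition~\ref{imagemiu2}, $C$ is a $\sigma$-code if and only if $\mu(C)$ is an $\frac{\mathbb{F}_q[Y]}{\langle Y^{m'}-1\rangle}$-submodule of $M'$ in which every element $\mathbf{c}=(c_1(Y),\dots,c_k(Y))$ satisfies $c_l(Y)=f_l(Y)\,\mathbf{b}_l$ for some $f_l(Y)\in\frac{\mathbb{F}_q[Y]}{\langle Y^{m_l}-1\rangle}$, where $\mathbf{b}_l=\sum_{j=0}^{n_l-1}Y^{jm_l}$. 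So it suffices to show that, coordinate by coordinate, this single condition translates under the ring isomorphism \eqref{decomp} into exactly the membership conditions (i) and (ii).

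First I would record the elementary identity $\mathbf{b}_l=(Y^{m'}-1)/(Y^{m_l}-1)$, from which $(Y^{m_l}-1)\mathbf{b}_l\equiv 0$ in $\frac{\mathbb{F}_q[Y]}{\langle Y^{m'}-1\rangle}$. Hence multiplication by $\mathbf{b}_l$ factors through reduction modulo $Y^{m_l}-1$, so the set $\{f_l\mathbf{b}_l \mid f_l\in \frac{\mathbb{F}_q[Y]}{\langle Y^{m_l}-1\rangle}\}$ coincides with the principal ideal $\langle\mathbf{b}_l\rangle$ of $\frac{\mathbb{F}_q[Y]}{\langle Y^{m'}-1\rangle}$. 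Thus the $l$-th coordinate condition ``$c_l=f_l\mathbf{b}_l$ for some $f_l$'' is nothing but the ideal membership $c_l\in\langle\mathbf{b}_l\rangle$.

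Next I would push this membership through the ring isomorphism \eqref{decomp}. Because \eqref{decomp} is a direct sum of rings, a principal ideal decomposes as the direct sum of the principal ideals generated by the images of its generator; concretely, the reduction of $\mathbf{b}_l$ into the factor $G_i$ is $\mathbf{b}_{lg_i}$, into $H_j$ is $\mathbf{b}_{lh_j}$, and into $H_j'$ is $\mathbf{b}_{lh_j'}$. Therefore $c_l\in\langle\mathbf{b}_l\rangle$ holds if and only if each primary component of $c_l$ lies in the corresponding ideal, i.e. the $G_i$-component equals $f_l\mathbf{b}_{lg_i}$ for some $f_l\in G_i$, and likewise for the $H_j$- and $H_j'$-components. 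Carrying this out for each coordinate $l=1,\dots,k$ across the blocks of \eqref{decomp3} yields precisely conditions (i) and (ii), giving the forward implication; and since every step is an equivalence, reading them in reverse recovers $c_l\in\langle\mathbf{b}_l\rangle$ for all $l$, so $C$ is a $\sigma$-code by Proposition~\ref{imagemiu2}.

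The routine bookkeeping is the indexing across the $s$ self-reciprocal blocks and the $t$ reciprocal pairs. The one point demanding genuine care is the converse, where the component data $\{f_l^{(i)}\in G_i\}$ together with the $H_j$ and $H_j'$ analogues must be glued into a single $f_l$: the gluing is exactly the Chinese Remainder isomorphism, and the freedom to replace the glued lift by its reduction modulo $Y^{m_l}-1$ (using $(Y^{m_l}-1)\mathbf{b}_l\equiv 0$) guarantees $f_l$ can be chosen in $\frac{\mathbb{F}_q[Y]}{\langle Y^{m_l}-1\rangle}$, as Proposition~\ref{imagemiu2} requires.
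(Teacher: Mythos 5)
Your proof is correct and takes essentially the same route as the paper, whose entire argument is the instruction to combine Proposition~\ref{imagemiu2}, the decomposition in equation~\eqref{decomp3}, and the Chinese Remainder algorithm. You have in fact supplied the details the paper leaves implicit, notably the identity $\mathbf{b}_l=(Y^{m'}-1)/(Y^{m_l}-1)$ and the reformulation of the coordinate condition $c_l=f_l\mathbf{b}_l$ as membership in the principal ideal $\langle\mathbf{b}_l\rangle$, which then splits componentwise under the ring decomposition.
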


\begin{proof}
Apply Proposition~\ref{imagemiu2}, equation~\ref{decomp3}, and Chinese remainder algorithm (see \cite[Algorithm 5.4]{gathen}).
\end{proof}

\begin{theorem}\label{sigmaselfdual}
A code $C$ is a Euclidean self-dual $\sigma$-code of length $n$ over $\mathbb{F}_q$ if and only if
\[\mu(C)=\left(\bigoplus_{i=1}^sC_i\right)\oplus\left(\bigoplus_{j=1}^t(C_j'\oplus (C_j')^\bot)\right),\]
where
\begin{itemize}
\item[(i)] For any $\mathbf{c}_i\in C_i\leq G_i^k,$ $\mathbf{c}_i=(c_{i1},\dots,c_{ik}),$ where $c_{il}=f_l\mathbf{b}_{lg_i},$ for some $f_l\in G_i,$ for all $l=1,2,\dots,k$ and $i=1,2,\dots,s,$ and

\item[(ii)] For any $\mathbf{c}_j'\in C_j'\leq H_j^k,$  $\mathbf{c}_j'=(c_{j1}',\dots,c_{jk}'),$ where $c_{jl}'=f_l'\mathbf{b}_{lh_j},$ for some $f_l'\in H_j,$ for all $l=1,2,\dots,k$ and $j=1,2,\dots,t.$

\item[(iii)] Submodule $C_i$ is Hermitian self-dual over $G_i,$ for all $i=1,2,\dots,k,$ and

\item[(iv)] Submodule $(C_j')^\bot$ is the Euclidean dual of $C_j',$ for all $j=1,2,\dots,k.$
\end{itemize}
\end{theorem}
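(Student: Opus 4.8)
The plan is to reduce Euclidean self-duality of $C$ to Hermitian self-duality of $\mu(C)$ via the Corollary following Proposition~\ref{product}, and then to track how the Hermitian form interacts with the Chinese-remainder decomposition~\ref{decomp3}. The coordinate forms (i) and (ii) are already guaranteed by Theorem~\ref{sigmadecomp}, so the genuinely new content is the self-duality statements (iii) and (iv). Accordingly, I would treat this as the ``self-dual refinement'' of Theorem~\ref{sigmadecomp} and concentrate on how the conjugation map redistributes the pairing across the factors in~\ref{decomp}.

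First I would record how the conjugation $\overline{\;\cdot\;}$, sending $Y\mapsto Y^{m'-1}=Y^{-1}$, acts on the factors in~\ref{decomp}. Because each $g_i$ is self-reciprocal, the ideal $\langle g_i^{p^a}\rangle$ is conjugation-invariant, so $\overline{\;\cdot\;}$ restricts to an $\mathbb{F}_q$-algebra automorphism of $G_i$. Because $h_j$ and $h_j^*$ form a reciprocal pair, conjugation carries $\langle h_j^{p^a}\rangle$ onto $\langle (h_j^*)^{p^a}\rangle$, giving an isomorphism $\overline{\;\cdot\;}:H_j\to H_j'$ with inverse $H_j'\to H_j$. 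This is the algebraic heart of the argument and is where I expect the bookkeeping to be heaviest, since one must check the interaction of conjugation with the $p^a$-th powers in~\ref{factor2}.

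Next I would split the Hermitian form $\langle\mathbf{a},\mathbf{b}\rangle=\sum_{l=1}^k a_l\overline{b_l}$ along~\ref{decomp2}. Since distinct Chinese-remainder factors are mutually annihilating, a product $a_l\overline{b_l}$ can be nonzero only when the factor containing $a_l$ and the factor containing $\overline{b_l}$ coincide; by the previous step this forces the $G_i$-part of $\mathbf{a}$ to pair with the $G_i$-part of $\mathbf{b}$, and the $H_j$-part of $\mathbf{a}$ to pair with the $H_j'$-part of $\mathbf{b}$. Hence the global Hermitian form is an orthogonal sum of a Hermitian form on each $G_i^k$ and, for each $j$, the pairing between $H_j^k$ and $(H_j')^k$, which after identifying $(H_j')^k$ with $H_j^k$ through $\overline{\;\cdot\;}$ becomes the ordinary Euclidean bilinear pairing on $H_j^k$.

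Finally I would translate $\langle\mu(C),\mu(C)\rangle=0$ componentwise using~\ref{decomp3}. Vanishing of the form inside each $G_i^k$, together with the dimension/order count of Theorem~\ref{permcode}(c), forces $C_i$ to be Hermitian self-dual over $G_i$, giving (iii); for each reciprocal pair, vanishing of the pairing between $C_j'$ and $C_j''$ together with matching dimensions forces $C_j''=(C_j')^\bot$, which is both condition (iv) and the displayed form $C_j'\oplus(C_j')^\bot$. The converse is obtained by reversing these componentwise conditions and reassembling via the Chinese remainder algorithm exactly as in Theorem~\ref{sigmadecomp}; the only point needing care is that the reassembled code still lies in $\mu(\mathbb{F}_q^n)$, which is ensured by retaining the coordinate constraints of Proposition~\ref{imagemiu2} throughout. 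The overall structure parallels the self-dual quasi-cyclic decomposition of \cite{ling,ling2}.
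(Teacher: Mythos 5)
Your proposal is correct and follows essentially the same route as the paper: the paper's proof of Theorem~\ref{sigmaselfdual} is just ``Apply Theorem~\ref{sigmadecomp} and Theorem 4.2 of Ling--Sol\'e,'' and your argument (conjugation fixes the self-reciprocal factors, swaps each reciprocal pair, and splits the Hermitian form into an orthogonal sum accordingly) is precisely the content of that cited theorem, written out explicitly. The one point you add beyond the paper --- correctly flagged --- is the check that the reassembled components remain inside $\mu(\mathbb{F}_q^n)$ subject to the coordinate constraints of Proposition~\ref{imagemiu2}, which the paper leaves implicit.
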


\begin{proof}
Apply Theorem~\ref{sigmadecomp} and \cite[Theorem 4.2]{ling2}.
\end{proof}

\newpage
\begin{table}[h]
	\centering
	\begin{tabular}{|c c c c c|}
		\hline\hline
		$n$ & $\sigma$ & Generator & $k$ & $d$ \\ [0.5ex]
		\hline\hline
		5 & $(1\;2)(3\;4)(5)$ & $(1,1,1)$ & $2$ & $3$\\
		5 & $(1\;2\;3)(4\;5)$ & $(1+Y^2,1+Y)$ & 3 & 2\\
		5 & $(1\;2\;3\;4(5)$ & $(1+Y^2+Y^3,1)$ & 4 & 2\\
		6 & $(1\;2)(3\;4)(5\;6)$ & $(1,Y,1+Y)$ & 2 & 4\\
		6 & $(1\;2\;3)(4\;5\;6)$ & $(1,1+Y^2)$ & 3 & 3\\
        6 & $(1\;2\;3\;4)(5\;6)$ & $(1,1+Y)$ & 4 & 2\\
        6 & $(1\;2\;3\;4\;5)(6)$ & $(1,1)$ & 5 & 2\\
		7 & $(1\;2\;3\;4)(5\;6\;7)$ & $(1+Y+Y^2,1)$ & 6 & 2\\
		7 & $(1\;2\;3\;4)(5\;6\;7)$ & $(1+Y,1+Y)$ & 5 & 2\\
        7 & $(1\;2\;3\;4)(5\;6)(7)$ & $(1+Y,1,1)$ & 3 & 4\\
		7 & $(1\;2\;3\;4\;5\;6)(7)$ & $(1+Y^2+Y^4,1)$ & 2 & 4\\
		8 & $(1\;2)(3\;4)(5\;6)(7\;8)$ & $(Y,1,1+Y,1)$ & 2 & 5\\
        8 & $(1\;2)(3\;4\;5)(6\;7\;8)$ & $(1+Y,1,1)$ & 3 & 4\\
        8 & $(1\;2\;3\;4\;5)(6\;7\;8)$ & $(1,1+Y+Y^2)$ & 5 & 2\\
		8 & $(1\;2\;3\;4\;5)(6\;7\;8)$ & $(1+Y+Y^2+Y^3,1+Y)$ & 6 & 2\\
        8 & $(1\;2\;3\;4\;5\;6\;7)(8)$ & $(1+Y^5+Y^6,1)$ & 7 & 2\\
		9 & $(1\;2)(3\;4)(5\;6)(7\;8)(9)$ & $(1,Y,1+Y,Y,1)$ & 2 & 6\\
        9 & $(1\;2\;3)(4\;5\;6)(7\;8\;9)$ & $(1,1+Y+Y^2,Y+Y^2)$ & 3 & 4\\
        9 & $(1\;2\;3\;4)(5\;6\;7\;8)(9)$ & $(1+Y,1+Y+Y^2,1)$ & 4 & 4\\
		9 & $(1\;2)(3\;4\;5\;6\;7)(8\;9)$ & $(1,1+Y+Y^3,1+Y)$ & 6 & 2\\
        10 & $(1\;2)(3\;4)(5\;6)(7\;8)(9\;10)$ & $(1+Y,Y,1,Y,1+Y)$ & 2 & 6\\
		10 & $(1\;2\;3)(4\;5\;6)(7\;8\;9)(10)$ & $(1,1+Y,Y^2,1)$ & 3 & 5\\
        10 & $(1\;2\;3\;4\;5)(6\;7\;8\;9\;10)$ & $(1+Y+Y^3+Y^4,1+Y^2+Y^3+Y^4)$ & 4 & 4\\
        10 & $(1\;2\;3\;4\;5)(6\;7\;8\;9\;10)$ & $(1+Y^3+Y^4,Y^3+Y^4)$ & 5 & 4\\
        11 & $(1\;2)(3\;4)(5\;6)(7\;8)(9\;10)(11)$ & $(1,Y,1+Y,1,1,1)$ & 2 & 7\\
        11 & $(1\;2\;3)(4\;5\;6)(7\;8\;9)(10\;11)$ & $(1,1+Y^2,1,1+Y)$ & 3 & 6\\
        11 & $(1\;2\;3\;4\;5)(6\;7\;8\;9\;10)(11)$ & $(1+Y,1+Y^3+Y^4,1)$ & 5 & 4\\
        12 & $(1\;2)(3\;4)(5\;6)(7\;8)(9\;10)(11\;12)$ & $(1,Y,1+Y,Y,1,1+Y)$ & 2 & 8\\
        12 & $(1\;2\;3)(4\;5\;6)(7\;8\;9)(10\;11\;12)$ & $(1,1+Y,Y,1+Y+Y^2)$ & 3 & 6\\
        12 & $(1\;2\;3\;4\;5)(6\;7\;8\;9\;10)(11\;12)$ & $(1,1+Y,1)$ & 5 & 4\\
        12 & $(1\;2\;3\;4\;5\;6)(7\;8\;9\;10\;11\;12)$ & $(1,1+Y^2+Y^3+Y^4+Y^5)$ & 6 & 4\\
        13 & $(1\;2)(3\;4)(5\;6)(7\;8)(9\;10)(11\;12)(13)$ & $(1,1,1,1+Y,Y,Y,1)$ & 2 & 8\\
        13 & $(1\;2\;3)(4\;5\;6)(7\;8\;9)(10\;11\;12)(13)$ & $(1+Y,Y,1,1+Y,1)$ & 3 & 7\\
        13 & $(1\;2\;3\;4)(\;5\;6\;7\;8)(9\;10\;11\;12)(13)$ & $(1+Y+Y^2,1,Y+Y^2+Y^3,1)$ & 4 & 6\\
        13 & $(1\;2\;3\;4\;5\;6)(7\;8\;9\;10\;11\;12)(13)$ & $(1,1+Y+Y^2+Y^4+Y^5,1)$ & 6 & 4\\
        14 & $(1\;2)(3\;4)(5\;6)(7\;8)(9\;10)(11\;12)(13\;14)$ & $(1,Y,Y,1+Y,1,1+Y,1)$ & 2 & 9\\
        14 & $(1\;2\;3)(4\;5\;6)(7\;8\;9)(10\;11\;12)(13\;14)$ & $(1,Y,1+Y,Y+Y^2,1+Y)$ & 3 & 8\\
        14 & $(1\;2\;3\;4)(5\;6\;7\;8)(9\;10\;11\;12)(13\;14)$ & $(1+Y,1+Y^2+Y^3,Y^2,1)$ & 4 & 7\\
        14 & $(1\;2\;\cdots\;7)(8\;9\;\cdots\;14)$ & $\begin{array}{c}
        (1+Y+Y^3+Y^4+Y^5+Y^6,\\
        1+Y+Y^2+Y^5+Y^6)
        \end{array}$ & 7 & 4\\
        15 & $(1\;2)(3\;4)\cdots(13\;14)(15)$ & $(1,Y,1+Y,Y,Y,1,1+Y,1)$ & 2 & 10 \\
        15 & $(1\;2\;3)\cdots (10\;11\;12)(13\;14)(15)$ & $(1,1+Y^2,Y,1+Y,1+Y,0)$ & 3 & 8\\
        15 & $(1\;\cdots\;4)\cdots(8\;\cdots\;12)(13\;14)(15)$ & $(1+Y+Y^3,Y,Y+Y^2,1,1)$ & 4 & 8\\

		\hline \hline
	\end{tabular}
	\caption{Binary $\sigma$-codes}
	\label{binary}
\end{table}
\newpage
\begin{table}[p]
	\centering
	\begin{tabular}{|c c c c c|}
		\hline\hline
		$n$ & $\sigma$ & Generator & $k$ & $d$ \\ [0.5ex]
		\hline\hline
		5 & $(1\;5)(2\;4)(3)$ & $(2Y,Y,2)$ & 2 & 3\\
        5 & $(1\;2)(3\;4\;5)$ & $(1+2Y,1+2Y^2)$ & 3 & 2\\
		5 & $(1\;2)(3\;4\;5)$ & $(1,2+2Y+Y^2)$ & 4 & 2\\
		6 & $(1\;2)(3\;4)(5\;6)$ & $(1+2Y,4+3Y,1+3Y)$ & 2 & 4\\
        6 & $(1\;2\;4)(3\;5\;6)$ & $(Y+Y^2,2Y)$ & 3 & 3\\
        6 & $(1\;2)(3\;4\;5\;6)$ & $(1,2+Y+2Y^2)$ & 4 & 2\\
		6 & $(1\;2\;3\;6\;5)(4)$ & $(2Y,1)$ & 5 & 2\\

		7 & $(1\;2)(3\;4)(5\;6)(7)$ & $(1+2Y,4+3Y,1+3Y,2)$ & 2 & 5\\
        7 & $(1\;2\;3)(4\;5\;6)(7)$ & $(1+Y,1+2Y+Y^2,1)$ & 3 & 4\\
        7 & $(1\;2\;3)(4\;5\;6\;7)$ & $(2+2Y+Y^2,2+2Y)$ & 5 & 2\\
		7 & $(1\;2\;3)(4\;5\;6\;7)$ & $(2+2Y+Y^2,1+2Y+2Y^2)$ & 6 & 2\\

        8 & $(1\;2)(3\;4)(5\;6)(7\;8)$ & $(1+2Y,1,1,1+Y)$ & 2 & 6\\
        8 & $(1\;\cdots\;4)(5\;\cdots\;8)$ & $(1+2Y+2Y^3,1+Y+2Y^2+Y^3)$ & 4 & 4\\
        8 & $(1\;2\;3\;4\;5\;6)(7\;8)$ & $(2+2Y+2Y^3+Y^5,Y)$ & 5 & 3\\
        8 & $(1\;2\;3\;4\;5\;6)(7\;8)$ & $(2+2Y+2Y^3+Y^4+Y^5,Y)$ & 6 & 2\\

        9 & $(1\;2)(3\;4)(5\;6)(7\;8)(9)$ & $(1+2Y,1,1+Y,1+Y,1)$ & 2 & 6\\
        9 & $(1\;2\;3\;4\;5\;6)(7\;8\;9)$ & $(1+2Y+2Y^2+Y^4+2Y^5,1+2Y+2Y^2)$ & 5 & 4\\
        9 & $(1\;2\;3\;4\;5\;6\;7)(8\;9)$ & $(Y+2Y^2+Y^4+2Y^5,1+Y)$ & 7 & 2\\

        10 & $(1\;2)(3\;4)(5\;6)(7\;8)(9\;10)$ & $(1+2Y,1,1+Y,1,1)$ & 2 & 7\\
        10 & $(1\;2\;3)(4\;5\;6)(7\;8\;9)(10)$ & $(1+2Y+Y^2,Y+2Y^2,Y+Y^2,1)$ & 3 & 6\\

        10 & $(1\;2\cdots7\;8)(9\;10)$ & $\begin{array}{c}
            \left(1+2Y+2Y^2+2Y^3+2Y^4\right.\\
            \left.+2Y^5+2Y^6+2Y^7,1+Y\right)
        \end{array}$ & 8 & 2\\

        11 & $(1\;2)\cdots(9\;10)(11)$ & $(1+Y,Y,Y,2+Y,1,1)$ & 2 & 8\\
        11 & $(1\;\cdots\;4)(5\;\cdots\;8)(9\;10)(11)$ & $(1+Y+Y^3,1+Y,Y,1)$ & 4 & 6\\

        11 & $(1\;2)(3\;4\cdots10\;11)$ & $\begin{array}{c}
        (2+2Y,2+Y+2Y^2+2Y^3+2Y^4\\
        +2Y^5+2Y^6+Y^7)
        \end{array}$ & 9 & 2\\

        12 & $(1\;2\;3)\cdots(10\;11\;12)$ & $(1+2Y^2,1+Y^2,1,1+2Y+Y^2)$ & 3 & 8\\
        12 & $(1\;2\;3\;4)\cdots(9\;10\;11\;12)$ & $(1+2Y^2+Y^3,Y+Y^2,Y+Y^2+Y^3)$ & 4 & 6\\
        12 & $(1\;\cdots\;10)(11\;12)$ & $\begin{array}{c}(1+Y+Y^2+Y^3\\
            +Y^4+Y^5+Y^6+Y^7+Y^8,1)
            \end{array}$ & 10 & 2\\

        13 & $(1\;2)\cdots(11\;12)(13)$ & $(1,1+2Y,Y,1+Y,1,1,1)$ & 2 & 9\\
        13 & $(1\;2\;3)\cdots(10\;11\;12)(13)$ & $(1+Y^2,2+Y+Y^2,2,2+Y,1)$ & 3 & 9\\
        13 & $(1\;\cdots\;4)\cdots(9\;\cdots\;12)(13)$ & $\begin{array}{c}(1+Y+Y^2+2Y^3,\\
                2+Y+Y^2+2Y^3,2Y+Y^2,1)\end{array}$ & 4 & 7\\
        13 & $(1\;\cdots\;5)(6\;\cdots\;10)(11\;12)(13)$ & $\begin{array}{c}
                (1+Y+Y^2+Y^4,\\
                1+2Y+2Y^4,1+Y,0)
              \end{array}$ & 5 & 6\\
        14 & $(1\;2)\cdots(13\;14)$ & $(1,1+2Y,Y,2+Y,1,1,1+Y)$ & 2 & 10\\
        14 & $(1\;2\;3)\cdots(10\;11\;12)(13\;14)$ & $(1+Y^2,2+Y^2,2+Y+Y^2,Y,1+y)$ & 3 & 9\\
        14 & $(1\;\cdots\;6)(7\;\cdots\;12)(13 14)$ & $\begin{array}{c}
           (1+Y^3+Y^5,\\
           1+Y^2+Y^4+Y^5,1)
          \end{array}$ & 6 & 6\\
        15 & $(1\;2)\cdots(13\;14)(15)$ & $\begin{array}{c}
           (1,1+2Y,1+Y,Y,1+2y,\\
           2,1,1+Y)
          \end{array}$ & 2 & 11\\

        15 & $(1\;2\;3)\cdots(13\;14\;15)$ & $\begin{array}{c}
            (1+Y^2,1+2Y+2Y^2,2+Y+Y^2,\\
            Y^2,Y+Y^2)
           \end{array}$ & 3 & 9\\

        15 & $(1\;\cdots\;7)(8\;\cdots\;14)(15)$ & $\begin{array}{c}
        (1+Y+Y^2+Y^3+Y^4+Y^6,\\
        Y^2+2Y^3+Y^4+Y^5,1)
        \end{array}$ & 7 & 6 \\
		\hline \hline
	\end{tabular}
	\caption{Ternary $\sigma$-codes}
	\label{ternary}
\end{table}
\newpage
\begin{table}[t]
	\centering
	\begin{tabular}{|c c c c c|}
		\hline\hline
		$n$ & $\sigma$ & Generator & $k$ & $d$ \\ [0.5ex]
		\hline\hline
        5 & $(1\;2)(3\;4)(5)$ & $(1,1+2Y,1)$ & 2 & 4\\
        5 & $(1\;2)(3\;4\;5)$ & $(3+Y,2+Y)$ & $4$ & $2$\\

        6 & $(1\;2\;3)(4\;5\;6)$ & $(1+3Y+2Y^2,1+2Y+3Y^2)$ & 3 & 4\\
        6 & $(1\;\cdots\;5)(6)$ & $(1+2Y+Y^2+3Y^3+Y^4,1)$ & 5 & 2\\

        7 & $(1\;2)(3\;4)(5\;6)(7)$ & $(1+2Y,3+4Y,1+4Y,1)$ & 2 & 5\\
        7 & $(1\;2\;3)(4\;5\;6)(7)$ & $(1+2Y+Y^2,3+Y,1)$ & 3 & 4\\
        7 & $(1\;2)(3\;4\;5\;6\;7)$ & $(1+Y,1+Y+Y^2+Y^4)$ & 5 & 2\\
        7 & $(1\;2\;3\;4\;5\;6)(7)$ & $(1+Y+Y^2+Y^3+Y4+2Y^5,1)$ & 6 & 2\\

        8 & $(1\;2)\cdots(7\;8)$ & $(1,Y,1+2Y,2+Y)$ & 2 & 6\\
        8 & $(1\;\cdots\;4)(5\;\cdots\;8)$ & $(1+2Y+3Y^2+4Y^3,4+Y+Y^2+2Y^3)$ & 4 & 4\\

        8 & $(1\;2)(3\;4\;5\;6\;7\;8)$ & $(1+Y,1+3Y+4Y^2+2Y^3+4Y^4)$ & 6 & 2\\
        8 & $(1\;\cdots\;7)(8)$ & $(1,1)$ & 7 & 2\\

        9 & $(1\;2)(3\;4)(5\;6)(7\;8)(9)$ & $(1+4Y,2+Y,3+4Y,3Y,1)$ & 2 & 7\\
        9 & $(1\;2\;3)(4\;5\;6)(7\;8\;9)$ & $(1+4Y,2+Y,1+2Y+3Y^2)$ & 3 & 6\\
        9 & $(1\;\cdots\;4)(5\;\cdots\;8)(9)$ & $(1+Y+2Y^2,1+Y^2+Y^3,1)$ & 4 & 5\\
        9 & $(1\;\cdots\;7)(8)(9)$ & $(1+Y^3+Y^4+Y^6,0,1)$ & 7 & 2 \\

        10 & $(1\;2)(3\;4)(5\;6)(7\;8)(9\;10)$ & $(1+4Y,2+Y,3+4Y,3Y,1+Y)$ & 2 & 8\\
        10 & $(1\;2\;3)(4\;5\;6)(7\;8\;9)(10)$ & $(1+2Y+Y^2,3+4Y,1+2Y+3Y^2,1)$ & 3 & 7\\
        10 & $(1\;\cdots\;5)(6\;\cdots\;10)$ & $(1+Y+Y^3,1+3Y+Y^2+Y^3)$ & 5 & 5\\
        10 & $(1\;\cdots\;8)(9\;10)$ & $(1,1)$ & 8 & 2\\

        12 & $(1\;2\;3)\cdots(10\;11\;12)$ & $(1+Y^2,1+Y+2Y^2,2+Y+2Y^2,1+3Y+4Y^2)$ & 3 & 8\\
        12 & $(1\;\cdots\;6)(7\;\cdots\;12)$ & $\begin{array}{c}
        (1+2Y+Y^2+3Y^3+Y^4+Y^5,\\
        1+Y+Y^3+Y^4+4Y^5)
        \end{array}$ & 6 & 6\\

        13 & $(1\;2)\cdots(11\;12)(13)$ & $\begin{array}{c}
        (1,1+2Y,1+3Y,\\
        Y,1+4Y,1+2Y,1)
        \end{array}$ & 2 & 10\\

        13 & $(1\;2\;3)\cdots(10\;11\;12)(13)$ & $\begin{array}{c}
        (1+Y^2,1+Y+3Y^2,\\
        1+Y+2Y^2,3+Y+2Y^2,1)
        \end{array}$ & 3 & 9\\

        13 & $(1\;2\;3\;4)\cdots(9\;10\;11\;12)(13)$ & $\begin{array}{c}
        (1+Y^2+Y^3,1+3Y+2Y^2+Y^3,\\
        2+3Y+Y^2+2Y^3,1)
        \end{array}$ & 4 & 8\\

        13 & $(1\;\cdots\;6)(7\;\cdots\;12)(13)$ & $\begin{array}{c}
        (1+2Y+3Y^2+4Y^3+Y^4+Y^5,\\
        1+Y+2Y^2+3Y^3+Y^4+Y^5,1)
        \end{array}$ & 6 & 6\\

		\hline \hline
	\end{tabular}
	\caption{5-ary $\sigma$-codes}
	\label{5ary}
\end{table}
\vspace{0.25cm}

\end{document}